\documentclass[11pt,letterpaper]{article}

\topmargin=-0.35in \topskip=0pt \headsep=15pt \oddsidemargin=0pt
\textheight=9in \textwidth=6.52in \voffset=0in

\usepackage{theorem}
\usepackage{amsmath}
\usepackage{definitions}
\usepackage{ifthen}
\usepackage{xspace}
\usepackage{amsmath,amsfonts,graphpap,amscd,mathrsfs,graphicx,lscape}
\usepackage{epsfig,amssymb,amstext,xspace}

\usepackage{bbm}
\usepackage{algorithm,algorithmic}
\usepackage{tikz, url}

\usepackage{color}              
\usepackage{epsfig}

\usepackage{authblk}

\newcommand{\comment}[1]{}


\newcommand{\vc}[1]{\ensuremath{\mathbf{#1}}\xspace}

\newcommand{\one}{\mathbbm{1}}

\newcommand{\abs}[1]{\vert{#1}\vert}

\newcommand{\rev}{\mathcal{R}}

\newcommand{\nash}{\mathcal{N}}

\newcommand{\E}{\mathbb{E}}

\newcommand{\Ts}{\ensuremath{\vc{T}}\xspace}
\newcommand{\TP}{\ensuremath{\vc{T'}}\xspace}

\renewcommand{\Expect}[2][]{\ensuremath{%
\ifthenelse{\equal{#1}{}}{\E}{\E_{#1}}%
[#2]}\xspace}


\begin{document}

\begin{titlepage}

\title{Pricing Public Goods for Private Sale}

\author[1]{Michal Feldman}
\author[2]{David Kempe}
\author[3]{Brendan Lucier}
\author[3]{Renato Paes Leme}
\affil[1]{Hebrew University and Harvard University}
\affil[2]{University of Southern California}
\affil[3]{Microsoft Research}

\renewcommand\Authands{ and }

\date{}

\maketitle

\begin{abstract}
We consider the pricing problem faced by a seller who assigns a price
to a good that confers its benefits not only to its buyers, but also to other
individuals around them.
For example, a snow-blower is potentially useful not only to the
household that buys it, but also to others on the same street.
Given that the seller is constrained to selling such a (locally) public
good via individual private sales, how should he set his prices
given the distribution of values held by the agents?

We study this problem as a two-stage game.  In the first stage,
the seller chooses and announces a price for the product. In
the second stage, the agents (each having a private value for
the good) decide simultaneously whether
or not they will buy the product.
In the resulting game, which can exhibit a multiplicity of equilibria, 
agents must strategize about whether they will
themselves purchase the good to receive its benefits.

In the case of a fully public good (where all agents benefit
whenever any agent purchases), 
we describe a pricing mechanism that is approximately
revenue-optimal (up to a constant factor) when values are
drawn from a regular distribution.
We then study settings in which the good is only ``locally'' public:
agents are arranged in
a network and share benefits only with their neighbors.
We describe a pricing method that approximately maximizes
revenue, in the worst case over equilibria of agent behavior,
for any $d$-regular network.  
Finally, we show that
approximately optimal prices can be found for general networks
in the special case that private values are drawn from a
uniform distribution.
We also discuss some barriers to extending these results
to general networks and regular distributions.

\end{abstract}

\end{titlepage}

\section{Introduction} \label{sec:intro}

Pricing products for sale is an important strategic decision for
firms. Based on the demand at different prices, an optimal price
should maximize the number of items sold, times the revenue per sold
item. A long history of work in economics, and more recently in
computer science, studies the problem of finding an optimal price (or,
more generally, selling mechanism), given a demand curve or estimate
thereof \cite{myerson-81, milgrom2004putting,krishna2009auction, hartline_lectures12}.

This view ignores the fact that products frequently exhibit
externalities: if a consumer $j$ purchases the product, it may
affect the utility of consumer $i$. These externalities naturally
differ in two dimensions: (1) whether they are positive or negative,
and (2) whether they affect other consumers when they purchase the
product, or when they do not purchase it.

Some of the classical literature in economics
\cite{jehiel:moldovanu:interdependent,jehiel:moldovanu:externalities,jehiel:moldovanu:stacchetti:nuclear,jehiel:moldovanu:stacchetti:multidimensional,brocas:endogenous}
focuses on negative externalities experienced by consumer $i$ as a
result of $j$'s purchase, regardless of whether $i$ himself purchases.
Motivating examples are weapons or powerful competitive technologies.
If a competitor $j$ has access to these technologies, it poses an
often significant threat to $i$, and hence, $i$ would be willing to
pay significant amounts of money to prevent $j$ from acquiring the
product.
There has been a recent focus on positive externalities between pairs
$i,j$ when both purchase
\cite{hartline:mirrokni:sundarajan,arthur:motwani:sharma:xu,akhlaghpour10,anari10,Haghpanah11,Bhalgat12}.
This type of scenario arises, for instance, for implicit
creation of technology standards, where the use of a particular
technology (such as an operating system or cell phone plan) becomes
more advantageous as others use the same technology.
In this context, the focus is often on finding the right ``seeds'' to
create enough implicit peer influence effects; de facto, some users
are offered much lower prices to serve as seeds.

In the present paper, we investigate important domains of
externalities, and the impact they have on pricing decisions.
Our main focus is on
\emph{positive externalities from purchasers on non-purchasers}.
In other words, when one customer purchases an item, others will
derive utility from it, even if they themselves do not purchase it.
This is the case commonly known as \emph{public goods} in
economics
\cite{samuelson54,Bergstrom86,mas-collel:whinston:green}.
Public goods arise in many real-world scenarios:
\begin{enumerate}
\item If one researcher acquires a useful piece of infrastructure
  (such as a poster printer), other research groups in the same
  department profit as well.
\item If one family purchases a useful and expensive  gardening tool, its
  neighbors can borrow the tool and use it as well.
\item {If a company finances useful infrastructure in a region,
  it also makes the region more attractive for other companies.
  One concrete} example is the Wi-Fi networks that Google
  recently built in Chelsea and in Kansas City
  \cite{google_internet_chelsea}, {which are expected to attract more
  talent to those areas}.
\end{enumerate}


Since the goods described above benefit an entire group of agents, one way of
purchasing them would be to gather as a group, purchase a single copy,
and split its cost among the group members.
This is, however, not always possible due to various reasons:
in case (1), regulations might allow a researcher to pay for a printer
from his grant budget, but not to pay for it partially;
in case (2), the family might consider it impolite to ask each
potential borrower of the gardening tool to contribute to it;
and in case (3), the companies that will benefit from the
infrastructure being in place might be competitors and therefore might
be unwilling to cooperate.
{More fundamentally, it has been long known that rational
agents in these types of settings have incentives to misrepresent
their true utilities (see, e.g., \cite{samuelson54}), a phenomenon
colloquially known as free-riding.}
Hence, it is very common that, despite the public-good nature of these
goods, purchases are made privately;
that is, one agent purchases the good, incurring the entire cost
alone, while benefiting the group as a whole.
It is crucial for a seller who is offering the product for sale to
take these externality considerations into
account.\footnote{The examples listed above can be considered nearly
\emph{pure} public goods, in that the benefits from being the purchaser
and being a ``neighbor'' are very similar. A much larger number of
products --- such as most entertainment technology --- has a
significant public component, but also a significant private
component. We discuss this interesting extension as a direction
for future work in Section~\ref{sec:concl}.}
Overall, we would expect the demand for such items to be 
reduced given that the buyers, taken as a whole, will demand fewer
copies.

We model the locally public nature of the good as follows.
We consider a graph $G$ that captures the interactions between the buyers.
Each buyer has a non-negative valuation drawn independently
from a distribution $F$ common to all buyers.
If the buyer or one of his neighbors purchases the good, he obtains his
valuation as utility; if he was the one purchasing it, then the good's
price (set by the seller) is subtracted from his utility.
We study the Nash Equilibria of the game described above and the
problem faced by the seller of setting a price (based on the graph and the
valuation distribution) in order to optimize the revenue at equilibrium.

In investigating this question, we are interested in understanding the
influence of the different parameters on the optimal pricing choice.
For example, how does the optimal price depend on the topology of the
network $G$?
Since it is usually hard for the seller to learn the buyers' social
network, is it possible to find a price that will generate
approximately optimal revenue for any network;
or, {if not, a price
that depends only on simple statistics about
the network, such as its average degree?}
We are also interested in investigating the power of discriminatory
vs.~non-discriminatory pricing.
Can the seller benefit from setting a different price for each agent?
Is there a non-discriminatory price that gives a good approximation
with respect to every discriminatory pricing policy?

\paragraph{Negative externalities and the Hipster Game} Our framework can
also be used to study other types of externalities. Consider a product
that serves the role of a fashion statement or status symbol. In that case,
it may be essential to the purchaser to be the \emph{only} one with
a copy. His utility is, therefore, his valuation if he has the product and no
other agent in his neighborhood has it. Otherwise, his utility is zero. We call
this the \emph{Hipster Game}. We show that the pricing problem in the
Hipster Game is analogous to the problem for public goods, and
the same algorithmic and analysis techniques yield essentially
identical results.

\subsection{Our Contribution}
%


\paragraph{Globally public goods}
We begin our study by focusing on the complete graph, i.e., the case
of \emph{globally public goods}.
We are interested in prices which will yield high revenue at
equilibrium. One immediate obstacle in this context is that the
(Bayesian) purchasing game played by the buyers may have (infinitely)
many equilibria.
We show that nonetheless, there is a single price $p$ which can be
computed explicitly from the agents' value distribution, and
which is approximately optimal in the following very strong sense:
The revenue under the worst-case equilibrium at price $p$ is within
a constant factor of the revenue of the \emph{best} equilibrium
for the best general (not necessarily uniform) price vector.
In other words, price discrimination can improve revenue by at most a
constant factor, even if one is optimistic about the equilibrium that
will be reached.

Our analysis draws a relation between our problem and the optimal
(Myerson) revenue of a single-item auction among $n$ bidders.
The main insight driving our result is that, at equilibrium, the
agents aim to make purchasing decisions so that only one
agent will buy the product, in expectation.
This connection allows us to leverage the rich literature on
single-item auctions for our analysis; it also
explicates the connection to the Hipster Game, where positive utility
can only be derived when exactly one agent obtains the good.


\paragraph{Locally public goods}
With a solid understanding of globally public goods in place,
we next turn our attention to \emph{locally public goods}, which are
modeled by arbitrary networks $G$. At this point, we cannot answer the
question of finding optimal prices for arbitrary $G$.
However, we make significant progress on the question, as follows.

First, we consider the case of $d$-regular graphs $G$.
Here, the results on globally public goods carry over in spirit.
However, technically, the assertion is weaker: we show how to
explicitly compute a uniform price $p$ which, when offered to all the
agents, is guaranteed to obtain a constant fraction of the
\emph{worst-case revenue} for any fixed price $p'$.
Remarkably, this price depends only on the degree $d$ and the
distribution $F$, and is independent of the actual graph structure.
Notice that the guarantee is weaker than the one for globally public goods
in two respects: (1) it only provides guarantees compared to one fixed
price, not a price vector with discrimination, and
(2) it compares only to the worst-case revenue for these other prices
$p'$ (instead of the best-case one).
This weaker assertion is inevitable: we show that there exist
$d$-regular graphs in which the gap between the best worst-case
revenue and the best revenue in equilibrium is $\Theta(d)$, and similarly, the
gap between the worst-case revenue of the best uniform price
vector and the best discriminating prices is $\Theta(d)$ as well.

We next consider the case of general graphs.
We present evidence that our previous approaches will face
inherent difficulties in handling general graphs.
In particular, we give an instance of a network such that, for every
price, the gap between the best-case and worst-case revenue is
$\Theta(n)$.
Therefore, approximate optimality of worst-case equilibria cannot be
established by bounding best-case revenues.
At a minimum, this raises an equilibrium selection problem: which is
the right revenue to optimize, and to compare against?

For $d$-regular graphs, our solution concept is to bound worst-case 
revenue for the price against
the worst-case revenue at other prices $p'$.
We show that for general graphs, this approach faces a fundamental
obstacle: approximating worst-case revenue to within a factor
$n^{1-\epsilon}$ for a given price is NP-hard, even if $F$ is the
uniform distribution. Thus, we do not expect a concise or useful
characterization of the approximate worst-case revenue.

Surprisingly, for the specific case of the uniform distribution $F$,
one does not need to be able to compute the objective function in
order to \emph{optimize} it: for the uniform distribution $F$, simply
offering a price of $\frac{1}{2}$ guarantees worst-case revenue within
a factor at most $4/e$ of optimal.
Unfortunately, the analysis techniques for this case rely very
specifically on the uniform distribution of valuations; it is an
interesting open question whether they can be extended beyond the
uniform distribution.

\subsection*{Related Work}
Externalities in general, and public goods in particular,
have a rich and long history of study in economics.
The tension arising from private provisioning of public goods has been
realized since the early studies of public goods:
Samuelson \cite{samuelson54} already noticed that private
provisioning will not necessarily achieve a social optimum.
(See also the discussion in Chapter 11 of \cite{mas-collel:whinston:green}.)
Implicit in the study of markets for public goods in this literature
is the goal of setting the right price, taking into account
production costs and utility curves.
Our model differs from the classic models in that purchase decisions
are binary, whereas traditional models allow agents to choose a
continuous level $x_i$ at which to purchase the public good.
Each agent's utility in the fully public setting is a function of
$\sum_i x_i$, whereas interpreting the $x_i$ as probabilities, the
utilities in our setting are of the form $1-\prod_i (1-x_i)$.
Thus, the analysis techniques commonly used in the literature on
public goods do not apply directly in our setting.

The study of private sales of public goods is also
present in the classic paper of Bergstrom, Blume and Varian
\cite{Bergstrom86} and in work by Allouch
\cite{Allouch}. The authors consider a model in which
agents need to split an initial endowment of public and private goods.
The focus of those papers is to prove existence and uniqueness of
equilibria in such games.

In our work, we assume that the good to be allocated is fully public.
There is a large body of literature studying the effects of
\emph{congestion}, where a good's value to an individual
decreases as others use it. Several works study
allocation mechanisms to price such congestion effects,
going back to the original work of
Pigou~\cite{pigou}. For 
overviews of
pricing of congestion in public and club goods, see
\cite{cornes:sandler:externalities,jackson:nicolo}.
In the present work, the good does not become congested;
instead, a graph structure specifies which individuals derive utility
from the purchases made by others.

A study of locally public goods in the graph-theoretic sense
considered here\footnote{Past work on ``local public goods'' used the
term to describe public goods for a community such as a small
town. As such, the term corresponds to a fully public (though
possibly congestible) good, when the set of individuals under
consideration is restricted.}
has only been begun much more recently, as part of the recent trend
toward studying classic games in a networked setting.
(See Galleotti \cite{Galeotti08} for a general overview.)
Specifically, locally public goods have been studied by
Bramoull\'{e} and Kranton \cite{BramoulleKranton07} and Bramoull\'{e},
Kranton and D'Amours \cite{BramoulleKranton10}.
Bramoull\'{e} and
Kranton~\cite{BramoulleKranton07,BramoulleKranton10} study a
setting in which agents decide on a level of effort; an agent's
utility grows as a function of the cumulative efforts of himself and
all his neighbors in the network. In this sense, the model generalizes
the classical public goods model to networks; 
as we discussed above, in contrast, our model focuses on
  probabilistic decisions to purchase or not to purchase.
One main difference between \cite{BramoulleKranton07} and our work is
that, instead of merely taking the games as given, we seek to engineer
the network game by setting parameters (in our case: prices) that will
lead to more desirable equilibria (equilibria of higher revenue).

Also closely related to our model is the work of Candogan, Bimpikis and
Ozdaglar~\cite{candogan}. This work considers a monopolist
who sets prices for agents that are embedded in a network and exhibit
positive externalities. 
Their model differs from ours in three main respects. 
First, as with the work of Bramoull\'{e} et al., the level of
  consumption in their model is continuous rather than binary.
Second, their externality model is different in that an agent's
utility is \emph{additive} over the purchases made by his neighbors,  
whereas in our case, purchases of neighbors are substitutes.
Third, they adopt a full-information model, in which the auctioneer
knows the demands of the agents, whereas in our model, the agents'
values are drawn from a known distribution.

We focus our attention on mechanisms that allocate a (globally or locally)
public good by way of \emph{posted prices}.  Posted price mechanisms
have received significant recent attention in the context of auctions
with multiple objects for sale \cite{chawla07,chawla10}
where it has been shown that, in various settings,
approximately optimal revenue can be extracted by offering
a vector of take-it-or-leave-it prices to each buyer in sequence.  Our
analysis shares similarities with this line of work: like \cite{chawla07},
we relate our pricing problem to a corresponding single-item auction
problem.
However, unlike \cite{chawla07},
setting a posted price in an auction for a \emph{public} good can lead
to multiple equilibria of buyer behavior,
with different equilibria generating substantially
different revenues.  

\section{Models and Preliminaries} \label{sec:models}

We write $[n] = \SET{1,2,\ldots,n}$.
Throughout, vectors are denoted by bold face.
The buyers form the vertices $V = [n]$ of an undirected graph
$G = ([n], E)$.
The neighbors of a node $i \in V$ are denoted by
$N(i) = \Set{j}{(i,j) \in E}$, with the convention
that $i \notin N(i)$.
For an event $\mathcal{E}$, we write $\one\{\mathcal{E}\}$ for the
function whose value is 1 when $\mathcal{E}$ happens and 0 otherwise.

We are interested in \emph{locally public goods}: goods that let a
player derive utility either from being allocated the good, or from
having a neighbor who is allocated the good.
More formally, we define utilities as follows:
Each agent $i$ has a private valuation $v_i$ for the good, drawn
independently from a common and commonly known
atomless distribution $F$.\footnote{Some of our preliminary results
carry over to the case when buyers have different
distributions $F_i$.} Since we assume that $F$ is atomless, for every $q \in
[0,1]$, there is at least one value of $p$ for which $F(p) = q$.
We write $F^{-1}(q) = \min \{p \vert F(p) = q\}$.

If $S$ is the set of agents allocated the good, and
$\pi_i$ the payment of agent $i$, then agent $i$'s utility is
\begin{align*}
u_i (S, \pi_i)
& = \begin{cases}
v_i - \pi_i & \text{if } i \in S \text{ or } S \cap N(i) \neq \emptyset \\
- \pi_i & \text{otherwise.}
\end{cases}
\end{align*}
A natural question arises regarding whether agents $i \notin S$
should have non-zero payments, given that they may profit from the
allocation to their neighbors.
In the present work, we focus on the \emph{private sale} of the good
via \emph{posted prices}, i.e., the seller determines the price of the
good, and an agent is only charged when purchasing the good.
This is the most widely used mechanism for selling goods, public or
private.

We remark that since our setting is a single-parameter setting,
Myerson's theory of optimal auctions \cite{myerson-81} would
yield a revenue-optimal mechanism.
However, the mechanism does not correspond to private sales 
since it charges not only the buyers, but also their neighbors who
derive benefit from the item.

\subsection{Equilibria in the posted-prices game}
The pricing decisions can be modeled as a two-stage game.
In the first stage, the seller sets a price vector
$\vc{p}$ to offer the buyers.
For most of the paper, and unless specified otherwise,
all agents will be offered the same price $p$. Subsequently, the
buyers play a simultaneous Bayesian game.
The seller's goal is to choose $\vc{p}$ so as to maximize
revenue.

We assume that the agents maximize their expected utility.
Given a price $p_i$, a player $i$ will buy if his utility from buying,
$v_i - p_i$, exceeds the expected utility from not buying,
$v_i \cdot \left( 1 - \prod_{j \in N(i)} \Prob{j \text{ does not buy}} \right)$.
At equality, $i$ could randomize between the two strategies, but
since we assumed the distribution $F$ to be atomless, equality is an
event of probability 0.
Thus, each agent will employ a threshold strategy: buy if and
only if
\begin{align*}
v_i
& \geq \frac{p_i}{\prod_{j \in N(i)} \Prob{j \text{ does not buy}}}
\; =: \; T_i.
\end{align*}
Because all other players $j$ also employ threshold strategies,
we can write
$\Prob{j \text{ does not buy}} = \Prob{v_j \leq T_j} = F(T_j)$.
Thus, the Nash Equilibria are exactly the threshold vectors
$\Ts = (T_1, \hdots, T_n)$ satisfying the following condition:
\begin{align}
T_i \cdot \prod_{j \in N(i)} F (T_j)
& = p_i, \quad \text{for all } i \in V.
\label{eqn:equilibrium-condition}
\end{align}

Given a price vector $\vc{p}$, we use
$\nash_{\vc{p}}$ to denote the set of Nash Equilibria
$\Ts = (T_1, \hdots, T_n)$ of the posted prices game with prices
$\vc{p}$.
We prove below that $\nash_{\vc{p}} \neq \emptyset$.
Given a Nash Equilibrium $\Ts \in \nash_{\vc{p}}$, the
corresponding expected revenue is
$\rev(\vc{p},\Ts) = \sum_i p_i \cdot (1-F(T_i)).$

\subsubsection{Existence of (possibly multiple) Equilibria}
To prove the existence of at least one equilibrium, define
\begin{align*}
B & =
\left[p_1, \frac{p_1}{\prod_{j \in N(1)} F(p_j)}\right] \times
\left[p_2, \frac{p_2}{\prod_{j \in N(2)} F(p_j)}\right] \times
\cdots \times
\left[p_n, \frac{p_n}{\prod_{j \in N(n)} F(p_j)}\right],
\end{align*}
and consider the best-response function
$\Psi: B \rightarrow \R^n_+$, defined as
$\Psi_i(\Ts) = p_i / \prod_{j \in N(i)} F(T_j)$.
We claim that $\Psi(\Ts) \in B$ for all $\Ts \in B$.

First, notice that for any $\Ts$, we have $\Psi_i(\Ts) \geq p_i$.
Intuitively, this captures that, regardless of the other players'
strategies, no player will ever buy the good for more than his value.
On the other hand, because $T_j \geq p_j$ for $\Ts \in B$,
we also get that
\begin{align*}
\Psi_i(\Ts)
& = \frac{p_i}{\prod_{j \in N(i)} F(T_j)}
\; \leq \frac{p_i}{\prod_{j \in N(i)} F(p_j)}.
\end{align*}

Thus, $\Psi: B \to B$ is a continuous function from $B$ to $B$.
So long as the prices are such that $F(p_i) > 0$ for all $i$,
$B$ is compact, and the existence of a fixed point (and thus an
equilibrium of the game) follows from Brouwer's Fixed Point Theorem.

If there is one or more agent $i$ with $F(p_i) = 0$, then the
following construction proves the existence of an equilibrium.
For each agent $i$ with $F(p_i) = 0$, set $T_i = p_i$,
and for all neighbors of $i$, set $T_i = \infty$.
In other words, $i$ deterministically buys the good, and $i$'s
neighbors never buy the good. It follows directly from the definition
of $\Psi$ that the best response for all these agents will be
$\Psi_i (\Ts) = T_i$. Since the agents with $T_i = \infty$
contribute a term $F(\infty) = 1$ to their neighbors' product, the
remaining problem remains unchanged if we remove all these agents
completely, and focus on the restriction of $\Psi$ to the remaining
agents. For those, the previous compactness argument applies.

\begin{remark}
In general, there could be many equilibria of the game.
Even for the special case of $G = K_n$ and $F(x) = x$ for $x \in [0,1]$, any
threshold vector \Ts with
$\prod_i T_i = p$ is a Nash Equilibrium of the posted-prices game with
uniform prices $p$.
Thus, there is in general a continuum of equilibria.
\end{remark}

\subsubsection{Symmetric Equilibria for $d$-Regular Graphs}
When the graph $G$ is $d$-regular, and the prices offered to the
buyers are the same, i.e., $p_i = p$ for all $i$, then we can show
that $\Psi$ also has a symmetric equilibrium.
Notice that if $\Ts = T \cdot \vc{1}$, then
$\Psi(\Ts) = p / F(T)^d \cdot \vc{1}$, so the best responses will be
symmetric.
It therefore suffices to study the function
$\psi(T) = p / F(T)^d$, and show that it has a fixed point.
To see this, observe that the condition for the existence of a symmetric equilibrium is the existence of a threshold
$T$ such that $T \cdot F(T)^d = p$.
Because $\psi(p) = p / F(p)^d$ and
$\psi( p / F(p)^d ) \leq p / F(p)^d$, the existence of a fixed point
in the interval $[p, p / F(p)^d]$
follows by the intermediate value theorem.

\subsection{Hipster Game}

In this section, we consider the following variation of the game.
In the \emph{Hipster Game}, each agent strives to be unique
among his friends, so upon acquiring a good, he only derives value
from it if he is the only person in his social network who has this
good. More precisely, if $S$ is the set of allocated agents, and
$\vc{\pi}$ is the vector of payments, then:
\begin{align*}
u_i (S, \pi_i)
& = \begin{cases}
v_i - \pi_i & \text{if } i \in S \text{ and } S \cap N(i) = \emptyset \\
- \pi_i & \text{otherwise.}
\end{cases}
\end{align*}
Notice that this definition of utilities seems to give us a
game which is the complete opposite of the Public Goods Game.
While the Public Goods Game was an example of positive externalities,
the Hipster Game is an example of negative externalities.
In fact, this game can be described as a congestion game:
the graph nodes are congestable resources, and the resources
requested by a player are exactly all nodes in his neighborhood.
While the Hipster Game is characterized by negative
  externalities, it exhibits a very similar equilibrium structure to
the Public Goods Game.
Player $i$ decides to purchase the good for price $p_i$ if
\begin{align*}
v_i \cdot \Prob{\text{no agent in } N(i) \text{ buys}} - p_i
& \geq 0.
\end{align*}
Therefore, the set of equilibria for this game is composed of threshold
strategies for all agents such that the thesholds satisfy
$p_i = T_i \cdot \prod_{j \in N(i)} F(T_j)$ for all $i$.

Thus, the Public Goods Game and the Hipster Game have the same
set of equilibria and also the same revenue.
(However, they are \emph{not isomorphic}, since the payoff structure is
not the same.)
We can use this observation to get a crude upper bound on the
expected revenue of the Public Goods Game for arbitrary
graphs.
We note that it is equal to the expected revenue of the
Hipster Game, which in turn is at most the expected
welfare of the Hipster Game, as each agent must derive
non-negative expected utility at equilibrium. 
The expected welfare of the Hipster Game is at most the
expected weight of the maximum weighted independent set with
weights $v_i$ drawn i.i.d.~from $F$.
Thus, we conclude:

\begin{lemma}
The expected revenue from the Public Goods Game is at most the expected
weight of the maximum weighted independent set of
$G$ with node weights $v_i$ drawn i.i.d.~from $F$.
\end{lemma}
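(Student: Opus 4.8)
The plan is to follow the chain of inequalities already foreshadowed in the discussion above: pass from the Public Goods Game to the Hipster Game, bound equilibrium revenue by equilibrium welfare, and bound welfare by the weight of a maximum weighted independent set.

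First I would invoke the equivalence just established: the Public Goods Game and the Hipster Game have exactly the same set of equilibria \eqref{eqn:equilibrium-condition}, and at each such equilibrium the same expected revenue. Hence it suffices to prove the stated bound for the Hipster Game, for an arbitrary price vector $\vc{p}$ and an arbitrary equilibrium $\Ts \in \nash_{\vc{p}}$. Fix such a $\Ts$; the induced purchasing behavior makes the allocated set $S = \Set{i}{v_i \geq T_i}$ a random set determined by the i.i.d.\ draw of the values $v_i$ from $F$.

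Second, I would bound revenue by expected welfare. In the Hipster Game, the deviation ``never purchase'' is always available to agent $i$ and yields utility exactly $0$, so the ex-ante expected equilibrium utility of each agent is non-negative. Writing $w_i(S) = v_i \cdot \one\{i \in S \text{ and } S \cap N(i) = \emptyset\}$ for the value agent $i$ derives and $\pi_i$ for his (random) payment, agent $i$'s equilibrium utility is $w_i(S) - \pi_i$, hence $\E{\pi_i} \leq \E{w_i(S)}$. Summing over $i$ gives $\rev(\vc{p},\Ts) = \sum_i \E{\pi_i} \leq \E{\sum_i w_i(S)}$, the expected welfare of the Hipster Game. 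Third, I would bound this expected welfare by the expected weight of a maximum weighted independent set: for every realization of the values, the set $S' = \Set{i \in S}{S \cap N(i) = \emptyset}$ of agents deriving positive value contains, by its very membership condition, no two adjacent vertices, so it is an independent set of $G$; thus $\sum_i w_i(S) = \sum_{i \in S'} v_i$ is at most the weight of the maximum weighted independent set of $G$ under the weights $v_i$. Taking expectations and chaining the two bounds yields the lemma, uniformly over all prices and all equilibria.

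The argument is short, and the only step needing a little care is the non-negativity of the ex-ante expected equilibrium utility in the Hipster Game, which is exactly where the detour through the Hipster Game earns its keep: the analogous non-negativity in the Public Goods Game does not by itself force the set of agents deriving value to be independent. Everything else is a direct summation together with the trivial fact that the weight of any independent set is dominated by that of the maximum weighted one.
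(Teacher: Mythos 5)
Your proposal is correct and follows exactly the paper's own argument, which appears in the paragraph preceding the lemma: pass to the Hipster Game via the identity of equilibria and revenue, bound revenue by welfare using non-negativity of expected equilibrium utility, and observe that the agents deriving positive value in the Hipster Game form an independent set. You simply spell out the details (e.g., why $S'$ is independent) that the paper leaves implicit.
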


\subsection{Regularity, Myerson's Lemma and the Prophet Inequality}\label{sec:review}

Much of our analysis will be based on Myerson's Lemma about the
optimal selling mechanism, combined with the prophet inequality.
We briefly review these concepts here.
A more comprehensive exposition can be found in Hartline's lecture
notes \cite{hartline_lectures12}.

\begin{definition}[Virtual values and regularity]
Let $F$ be the cumulative distribution function of an atomless distribution
on an interval $[a,b]$, and let $f$ be its corresponding density function.
The \emph{virtual value function} associated with distribution $F$ is
defined as $\phi(x) = x - \frac{1-F(x)}{f(x)}$.
The distribution $F$ is \emph{regular} if $\phi(x)$ is non-decreasing.
\end{definition}

Consider a single-agent scenario in which an agent with value
$v$, drawn from $F$, is made a take-it-or-leave-it offer at price $p$.
The agent will accept the offer iff his value exceeds $p$, which
happens with probability $1-F(p) =: q$.
Therefore, the revenue obtained by posting a price $p$ is
$p \cdot (1-F(p))$, which can be also written in terms of the
\emph{quantile space} as $q \cdot F^{-1}(1-q)$.
This motivates the following definition:

\begin{definition}[Revenue curve]
The \emph{revenue curve} corresponding to the cumulative distribution
function $F$ is a function $R:[0,1] \rightarrow \R_+$, defined by
$R(q) = q \cdot F^{-1}(1-q)$.
It specifies the revenue as a function of the \emph{ex ante}
probability of sale.
\end{definition}

The derivative of the revenue curve with respect to $q$ is
$\DiffOp{R}{q} (q)	 = F^{-1}(1-q) - \frac{q}{f(F^{-1}(1-q))}
\; = \; \phi(F^{-1}(1-q))$. Since $F^{-1}$ is monotone non-decreasing, a
distribution is
regular iff its corresponding revenue curve is concave.

\subsubsection{Single-item auctions and Myerson's Lemma}
We draw repeatedly on the scenario in which a single item is sold to
$n$ agents with valuations drawn i.i.d.~from a regular distribution $F$.
A mechanism receives a vector of bids
$\vc{b} = (b_1, \hdots, b_n)$ and returns an allocation vector
$\vc{x}(\vc{b}) = (x_1, \ldots, x_n) \in \R^n_+$
such that $\sum_i x_i \leq 1$,
and a payment vector $\vc{\pi}(\vc{b}) \in \R^n_+$.
The mechanism is \emph{incentive compatible} if no bidder can benefit
from reporting a value other than his true value, i.e., if bidding
$b_i = v_i$ is a weakly dominant strategy for each agent $i$.
Myerson~\cite{myerson-81} established the following lemma, which
relates the payments of an incentive compatible mechanism to the
expected virtual values:

\begin{lemma}[Myerson~\cite{myerson-81}]
\label{lem:myerson-lemma}
For any incentive compatible mechanism, and any bidder $i$,
$\Expect[\vc{v}]{\pi_i(\vc{v})}
= \Expect[\vc{v}]{x_i(\vc{v}) \cdot \phi(v_i)}$.
\end{lemma}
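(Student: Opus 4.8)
This is the classical Myerson payment identity, and I would prove it via the envelope characterization of incentive compatibility. First, fix a bidder $i$ and pass to \emph{interim} quantities: for a value $v_i \in [a,b]$, let $\bar x_i(v_i) = \Expect[\vc{v}_{-i}]{x_i(v_i,\vc{v}_{-i})}$ and $\bar\pi_i(v_i) = \Expect[\vc{v}_{-i}]{\pi_i(v_i,\vc{v}_{-i})}$ be the expected allocation and payment of a bidder who reports $v_i$, and let $u_i(v_i) = v_i\,\bar x_i(v_i) - \bar\pi_i(v_i)$ be the interim utility of type $v_i$. Incentive compatibility of the mechanism gives the interim inequality $u_i(v_i) \ge v_i\,\bar x_i(w) - \bar\pi_i(w)$ for all $v_i, w \in [a,b]$, since a bidder with true value $v_i$ could instead misreport $w$.

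Next I would use the fact that the right-hand side above, as a function of $v_i$ with $w$ fixed, is affine; hence $u_i$ is a pointwise supremum of affine functions and therefore convex on $[a,b]$. A convex function is differentiable outside a countable set, and where $u_i$ is differentiable the maximizing report is $w = v_i$, so $u_i'(v_i) = \bar x_i(v_i)$. Since $\bar x_i$ is bounded (it lies in $[0,1]$), $u_i$ is Lipschitz, hence absolutely continuous, so $u_i(v_i) = u_i(a) + \int_a^{v_i} \bar x_i(t)\,dt$. Under the standard normalization $u_i(a)=0$ (satisfied, e.g., when valuations are nonnegative and a bidder who declares value $a=0$ pays nothing, or, in the optimal auction, because the lowest type never wins while payments are nonnegative), this rearranges to $\bar\pi_i(v_i) = v_i\,\bar x_i(v_i) - \int_a^{v_i}\bar x_i(t)\,dt$.

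It then remains to take the expectation over $v_i \sim F$ and recognize the virtual value. Using Fubini's theorem to swap the order of integration,
\begin{align*}
\int_a^b\!\!\int_a^{v_i}\!\bar x_i(t)\,dt\,f(v_i)\,dv_i
\;=\; \int_a^b \bar x_i(t)\!\int_t^b\! f(v_i)\,dv_i\,dt
\;=\; \int_a^b \bar x_i(t)\,(1-F(t))\,dt,
\end{align*}
so that $\Expect[v_i]{\bar\pi_i(v_i)} = \int_a^b \bar x_i(t)\bigl(t - \tfrac{1-F(t)}{f(t)}\bigr)f(t)\,dt = \Expect[v_i]{\bar x_i(v_i)\,\phi(v_i)}$ by the definition of $\phi$. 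Finally, since $\phi(v_i)$ does not depend on $\vc{v}_{-i}$, replacing $\bar x_i(v_i)$ and $\bar\pi_i(v_i)$ by their definitions and combining the two nested expectations yields $\Expect[\vc v]{\pi_i(\vc v)} = \Expect[\vc v]{x_i(\vc v)\,\phi(v_i)}$, as claimed. The only genuinely delicate step is the envelope argument — establishing that $u_i$ is absolutely continuous with $u_i' = \bar x_i$ almost everywhere, so that the integral representation (and hence the vanishing of the boundary term) is valid; everything else is the definition of $\phi$ together with an application of Fubini.
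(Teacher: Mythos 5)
Your proof is correct and is precisely the standard envelope-theorem derivation of the payment identity; the paper itself states this lemma only as a citation to Myerson~\cite{myerson-81} without reproducing an argument, so there is no in-paper proof to diverge from. Your aside about the normalization $u_i(a)=0$ correctly identifies the one hypothesis left implicit in the paper's statement (without it the identity holds only up to the additive constant $-u_i(a)$), and the remaining steps --- convexity of the interim utility, Lipschitz continuity giving absolute continuity, and the Fubini exchange producing the $\frac{1-F}{f}$ term --- are all sound.
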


In particular, it follows from Lemma~\ref{lem:myerson-lemma} that the
revenue-maximizing incentive compatible mechanism allocates
the item entirely to an agent with highest non-negative virtual value.
The \emph{$n$-agent Myerson Revenue} is the optimal revenue that can
be obtained in a setting with a single item and $n$ agents,
and is given by $\rev^M_n = \Expect{\max_{i} \phi(v_i)^+}$,
where $z^+ = \max (0,z)$.
When clear from the context, we drop the subscript $n$.

Lemma~\ref{lem:myerson-lemma} also implies that the optimal mechanism
for selling an item to a single agent is a posted price mechanism with
price $r = \phi^{-1}(0)$, known as the \emph{Myerson Reserve Price}.
It follows that the $n$-agent Myerson Revenue can be bounded as follows:
\begin{align*}
\rev^M_n
& = \Expect{\max_i \phi(v_i)^+}
\; \leq \; \Expect{\sum_i \phi(v_i)^+}
\; = \; n \cdot r \cdot (1-F(r)).
\end{align*}

\subsubsection{Posted-Price Mechanisms and the Prophet Inequality}
A natural mechanism for selling a good is the \emph{sequential posted prices}
mechanism.
In round $i$, if the good has not been sold previously,
the mechanism offers the good to agent $i$ at a price of $p_i$.
The revenue obtained by this mechanism is
$\sum_{i=1}^n p_i \cdot
  \Prob{v_i \geq p_i \text{ and } v_j < p_j \text{ for all } j < i}
= \sum_{i=1}^n p_i \cdot (1-F(p_i)) \cdot \prod_{j < i} F(p_j)$.
Because the sequential posted-price mechanism is incentive compatible,
and $\rev^M_n$ is defined as the optimum expected revenue for any
incentive compatible mechanism, we obtain that for any price vector
$(p_1, \hdots, p_n)$:
\begin{align*}
\sum_{i=1}^n p_i \cdot (1-F(p_i)) \cdot \prod_{j < i} F(p_j)
& \leq \rev^M_n.
\end{align*}
The result known as the \emph{Prophet Inequality} guarantees the
existence of a price $p^*$ (called the \emph{prophet price}) such that
a sequential posted-price mechanism with uniform price $p^*$
(i.e., where $p_i = p^*$ for all $i$) generates at least half of the
optimal revenue. In other words:
\begin{align*}
\sum_{i=1}^n p^* \cdot (1-F(p^*)) \cdot F(p^*)^{i-1}
& \geq \half \rev^M_n.
\end{align*}
The Prophet Inequality (and its variants) is a powerful tool in
optimal stopping theory; it was introduced and applied in algorithmic
mechanism design by Hajiaghayi, Kleinberg and Sandholm
\cite{HajiaghayiKS07}. See \cite{alaei11} and \cite{kleinberg12} for
recent developments of the topic.

\section{Pricing globally public goods}\label{sec:global}

In this section, we focus on the case of a \emph{globally} public good.
That is, the underlying network is a clique, $G = K_n$.
We assume that the common value distribution $F$ of the agents is
atomless and regular. Our main result is the following:

\begin{theorem} \label{thm:main-clique}
In the globally public good setting, let
$p = F^{-1}(1-1/n) \cdot (1-1/n)^{n-1}$.
Then, if the price $p$ is offered to all agents, the worst-case
revenue among the equilibria $\Ts \in \nash_{p}$ is at least a
constant fraction of the revenue of the best equilibrium for the best
(possibly non-uniform) price vector to offer the agents.
\end{theorem}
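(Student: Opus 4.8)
The plan is to sandwich both quantities in the statement between constant multiples of the single scalar $F^{-1}(1-1/n)$ (equivalently $n\cdot R(1/n)$): I will show that the benchmark --- the best equilibrium revenue over all price vectors --- is $O(F^{-1}(1-1/n))$, while \emph{every} equilibrium at the prescribed price $p$ yields revenue $\Omega(F^{-1}(1-1/n))$, so that their ratio is a universal constant. Throughout we may assume $n\ge 2$ (the case $n=1$ being degenerate).

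\emph{Upper-bounding the benchmark.} Fix any price vector $\vc{p}'$ and any equilibrium $\Ts'\in\nash_{\vc{p}'}$, and set $q_i = 1-F(T_i')$. Since $F$ is atomless and regular, $T_i' = F^{-1}(1-q_i)$, and the equilibrium condition~\eqref{eqn:equilibrium-condition} on $K_n$ gives $p_i' = T_i'\prod_{j\ne i}F(T_j')$, so the revenue can be rewritten purely in quantile space as $\rev(\vc{p}',\Ts') = \sum_i p_i' q_i = \sum_i R(q_i)\prod_{j\ne i}(1-q_j)$. For any fixed ordering of the agents, $\prod_{j\ne i}(1-q_j)\le\prod_{j<i}(1-q_j)$, and hence $\rev(\vc{p}',\Ts')\le\sum_i F^{-1}(1-q_i)\,q_i\prod_{j<i}(1-q_j)$. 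The right-hand side is exactly the expected revenue of the sequential posted-price mechanism with prices $F^{-1}(1-q_i)$ in a single-item auction among $n$ i.i.d.\ bidders from $F$; as that mechanism is incentive compatible, the right-hand side is at most $\rev^M_n$. Thus the benchmark is at most $\rev^M_n$.

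\emph{From $\rev^M_n$ to the posted price.} Using $\rev^M_n\le n\cdot r\cdot(1-F(r)) = n\cdot R(q^\star)$ with $q^\star := 1-F(r)$ the monopoly quantile, together with the concavity of $R$ (guaranteed by regularity), I will show $\rev^M_n\le 3\,n\,R(1/n) = 3\,F^{-1}(1-1/n)$. If $q^\star\le 1/n$, concavity of $R$ with $R(0)=0\le R(1)$ gives $R(q^\star)\le R(1/n)/(1-1/n)\le 2R(1/n)$; if $q^\star>1/n$, I use the identity $\rev^M_n = \int_0^{q^\star} R'(t)\cdot n(1-t)^{n-1}\,dt$ (the density $n(1-t)^{n-1}$ being that of the smallest of the $n$ quantiles), integrate by parts, and bound the resulting pieces using $R(t)\le R(1/n)$ for $t\le 1/n$ and $R(t)\le nt\,R(1/n)$ for $t\ge 1/n$, both consequences of concavity of $R$. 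Since $p = F^{-1}(1-1/n)(1-1/n)^{n-1}\ge F^{-1}(1-1/n)/e$, it now suffices to prove that every equilibrium at price $p$ has revenue $\Omega(F^{-1}(1-1/n))$.

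\emph{Lower-bounding the worst-case revenue at $p$.} Fix $\Ts\in\nash_p$ and write $q_i = 1-F(T_i)$, $P = \prod_j F(T_j)$. From~\eqref{eqn:equilibrium-condition}, $T_i = p\,F(T_i)/P\le p/P$ for every $i$, hence $F(T_i)\le F(p/P)$ and therefore $P\le F(p/P)^n$. Writing $s = p/P$, this reads $p\le s\,F(s)^n$; since $x\mapsto x\,F(x)^n$ is increasing and at $x = F^{-1}(1-1/n)$ equals $F^{-1}(1-1/n)(1-1/n)^n < p$, we get $s>F^{-1}(1-1/n)$ and hence $P < (1-1/n)^{n-1}$. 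Consequently $\sum_i -\ln(1-q_i) = -\ln P > (n-1)\ln\tfrac{n}{n-1} > 1-\tfrac1n\ge\tfrac12$. If some $q_i>\tfrac12$ then $\sum_i q_i>\tfrac12$; otherwise every $q_i\le\tfrac12$, where $-\ln(1-q_i)\le(2\ln 2)\,q_i$, so $\sum_i q_i > \tfrac1{4\ln 2}$. Either way $\rev(p,\Ts) = p\sum_i q_i > \tfrac{p}{4\ln 2}\ge \tfrac{F^{-1}(1-1/n)}{4e\ln 2}$. Combining with the bound of the previous step, the worst-case revenue at $p$ is at least $\tfrac{1}{12 e\ln 2}$ times the benchmark, proving the theorem.

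\emph{Main obstacle.} The delicate step is the last one: because of the continuum of equilibria, I must simultaneously rule out equilibria in which the purchasing probability concentrates on a few agents and ones in which it is spread thinly across many, and the argument must exploit that $p$ is tuned so low (quantile roughly $1/n$, further discounted by the externality factor $(1-1/n)^{n-1}$) that the fixed-point structure $T_i = p\,F(T_i)/P$ forces the total purchase probability $\sum_i q_i$ to be bounded away from $0$. The two reductions in the earlier steps are comparatively routine, modulo standard concavity estimates on the revenue curve and the incentive-compatibility/Myerson argument.
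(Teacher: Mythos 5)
Your proposal is correct, and its overall architecture coincides with the paper's: (i) bound the benchmark by the $n$-agent Myerson revenue $\rev^M_n$ via the reduction to a sequential posted-price mechanism (this step is literally the paper's Proposition~\ref{prop:clique-upper-bound}); (ii) relate $\rev^M_n$ to the scalar $T=F^{-1}(1-1/n)$ using concavity of the revenue curve; and (iii) show that the specific price $p=T(1-1/n)^{n-1}$ forces $\prod_i F(T_i)\le(1-1/n)^{n-1}$ at \emph{every} equilibrium, hence a total purchase probability bounded below by a constant. Steps (i) and (iii) match the paper's Proposition~\ref{prop:clique-upper-bound} and Lemma~\ref{lem:any-clique} almost exactly (your version of (iii) is a direct monotonicity argument on $x\mapsto xF(x)^n$ rather than the paper's proof by contradiction, and you use a logarithmic estimate where the union bound $\sum_i q_i\ge 1-\prod_i(1-q_i)$ would suffice). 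The genuine divergence is in step (ii): the paper routes through the symmetric equilibrium (Lemma~\ref{lem:const-approx}) and, in the case $T>r$, uses the Chawla--Hartline--Kleinberg truncation of $\max_i\phi(v_i)$ at $\nu=\phi(T)$, whereas you bypass the symmetric equilibrium entirely and, in the corresponding case $q^\star>1/n$, bound $\rev^M_n=\int_0^{q^\star}R'(t)\,n(1-t)^{n-1}dt$ by integration by parts and the two concavity estimates $R(t)\le R(1/n)$ on $[0,1/n]$ and $R(t)\le nt\,R(1/n)$ on $[1/n,1]$; carrying out that computation gives $\rev^M_n\le 2nR(1/n)$, so your claimed constant $3$ is safely achievable. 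The paper's truncation argument yields a slightly sharper overall constant (roughly $1/8$ versus your $\tfrac{1}{12e\ln 2}$), while your integral route is more self-contained and makes explicit that the single scalar $nR(1/n)=T$ controls both sides of the comparison.
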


The main insight driving Theorem \ref{thm:main-clique} is that, at equilibrium,
the agents aim to make purchasing decisions in such a way that only one
agent will buy the product, in expectation.  With this in mind, we draw a
relationship between the public good pricing problem and a single item
auction that attempts
to sell a single item to $n$ bidders with value distributions $F$.
We relate the revenue at different price vectors and equilibria in
the public good mechanism to the optimal (Myerson) revenue in
the single item auction.
{We can then apply the theory of optimal
auctions to guide our choice of pricing in the public good mechanism.
We note that similar techniques have been applied in the context of
sequential posted pricing for multi-item auctions \cite{chawla07}.
However, a novel difficulty that we must overcome is the existence of multiple
equilibria of bidder behavior for any given price;
we must therefore find a price for which \emph{all} equilibria generate
a good approximation to the optimal revenue.}

First, in Proposition~\ref{prop:clique-upper-bound}, we show that the
revenue of any equilibrium of any mechanism is upper-bounded by
$\rev^M_n$.
Next, Lemma~\ref{lem:const-approx} shows that for the price
vector $\vc{p} = p \cdot \vc{1}$, where $p$ is the price specified in the assertion of Theorem \ref{thm:main-clique},
the \emph{symmetric} equilibrium is guaranteed to achieve at
least a constant fraction of the Myerson Revenue.
Finally, in Lemma~\ref{lem:any-clique}, we show that in \emph{every}
equilibrium for this price vector $\vc{p}$, the revenue is at least a constant fraction of that
of the symmetric equilibrium for this price vector.

A corollary of this analysis is that the ability to
price-discriminate does not substantially influence revenue: a uniform price
vector can extract a constant fraction of the optimal revenue attainable
by any mechanism, and hence any (non-uniform) vector of prices.

We note that while our analysis makes use of a connection to the Myerson
optimal auction, offering the Myerson Reserve Price does not
necessarily extract a constant fraction of the optimal revenue, even
when $F$ is regular.
In the appendix \ref{appendix:example}, we provide an example illustrating this 
revenue gap at the Myerson Reserve Price.

\begin{proposition} \label{prop:clique-upper-bound}
Let $\vc{p} = (p_1, \ldots, p_n)$ be any price vector,
and $\Ts = (T_1, \ldots, T_n) \in \nash_{\vc{p}}$ be an arbitrary
equilibrium of the public goods selling game with prices $\vc{p}$.
Then, $\rev(\vc{p},\Ts) \leq \rev^M_n$.
\end{proposition}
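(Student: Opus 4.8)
The plan is to exhibit an incentive-compatible single-item auction whose expected revenue equals $\rev(\vc{p},\Ts)$, and then invoke the definition of $\rev^M_n$ as the supremum of revenue over all incentive-compatible mechanisms. Fix the equilibrium thresholds $\Ts = (T_1,\ldots,T_n)$ satisfying the equilibrium conditions $T_i \cdot \prod_{j \in N(i)} F(T_j) = p_i$; since $G = K_n$, this reads $T_i \cdot \prod_{j \neq i} F(T_j) = p_i$. The revenue of the public goods game is $\rev(\vc{p},\Ts) = \sum_i p_i (1 - F(T_i))$.

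First I would rewrite each $p_i$ using the equilibrium condition and massage the revenue expression. Substituting $p_i = T_i \prod_{j \neq i} F(T_j)$ gives $\rev(\vc{p},\Ts) = \sum_i T_i \cdot (1 - F(T_i)) \cdot \prod_{j \neq i} F(T_j)$. The key observation is that $T_i (1-F(T_i)) \prod_{j\neq i} F(T_j)$ is exactly the contribution one would get in a single-item auction from the event ``agent $i$ has value at least $T_i$ and every other agent $j$ has value below $T_j$,'' where agent $i$ is charged $T_i$. So define the following mechanism on i.i.d.\ draws $\vc{v} \sim F^n$: allocate the item to agent $i$ (and charge $i$ the price $T_i$) exactly when $v_i \geq T_i$ and $v_j < T_j$ for all $j \neq i$; if no such agent exists (i.e., two or more agents clear their thresholds, or none do), do not allocate and charge nothing. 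This is a deterministic posted-price-style rule with fixed thresholds, hence incentive compatible: each agent faces a take-it-or-leave-it price $T_i$ contingent only on others' reports, so truthful bidding is weakly dominant. Its expected revenue is precisely $\sum_i T_i \cdot \Prob{v_i \geq T_i,\ v_j < T_j\ \forall j\neq i} = \sum_i T_i (1-F(T_i)) \prod_{j\neq i} F(T_j) = \rev(\vc{p},\Ts)$.

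Since this mechanism is incentive compatible and always allocates at most one item, its expected revenue is at most $\rev^M_n$, the optimal revenue over all incentive-compatible single-item mechanisms. This yields $\rev(\vc{p},\Ts) \leq \rev^M_n$, as claimed. The only subtle point — and the step I would be most careful about — is verifying incentive compatibility of the constructed mechanism, specifically that discarding the item when two or more agents clear their thresholds does not create a profitable deviation; this follows because agent $i$'s allocation is monotone in $v_i$ (raising $v_i$ above $T_i$ can only help) and the charged price $T_i$ is independent of $v_i$, so the standard single-parameter characterization applies. An alternative, perhaps cleaner, route to the same conclusion is to appeal to Myerson's Lemma (Lemma~\ref{lem:myerson-lemma}) directly: the allocation rule above allocates to $i$ only on an event determined by fixed thresholds, so one can compute $\Expect{x_i(\vc{v})\phi(v_i)}$ and sum, but the mechanism-construction argument is more transparent and avoids regularity considerations entirely.
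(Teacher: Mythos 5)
Your proof is correct and follows essentially the same route as the paper: substitute the equilibrium condition $p_i = T_i\prod_{j\neq i}F(T_j)$ into the revenue expression and dominate the resulting sum by the expected revenue of an incentive-compatible single-item mechanism, hence by $\rev^M_n$. The only difference is the intermediate mechanism---the paper first bounds $\prod_{j\neq i}F(T_j)\leq\prod_{j<i}F(T_j)$ and appeals to the sequential posted-price mechanism, whereas you construct a simultaneous exclusive-threshold mechanism whose revenue equals $\rev(\vc{p},\Ts)$ exactly; both are valid, and your incentive-compatibility check (monotone allocation with threshold payments) is sound.
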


\begin{proof}
Using that $p_i = T_i \cdot \prod_{j \neq i} F (T_j)$ by
Equation~\eqref{eqn:equilibrium-condition}, we can bound the revenue as
\begin{align*}
\rev(\vc{p},\Ts)
& = \sum_i T_i \cdot (1-F(T_i)) \cdot \prod_{j \neq i} F(T_j)
\; \leq \; \sum_i T_i \cdot (1-F(T_i)) \cdot \prod_{j < i} F(T_j)
\; \leq \; \rev^M_n.
\end{align*}
In the last inequality, we used that the sum expresses the expected
revenue of the sequential posted price mechanism in which the \Kth{i}
player is offered a price of $T_i$; therefore, the sum is
upper-bounded by the expected revenue of the optimal mechanism for
selling a single item.
\end{proof}

For the remainder of this section,
we fix $T$ such that $F(T) = 1-\frac{1}{n}$,
and $p = T \cdot F(T)^{n-1} = T \cdot (1-1/n)^{n-1}$.
Let $\vc{p} = p \cdot \vc{1}$ be the vector in which all agents are
offered $p$.

\begin{lemma} \label{lem:const-approx}
Let $\Ts = T \cdot \vc{1}$ be the symmetric equilibrium corresponding
to $\vc{p}$. Then,
\begin{align*}
\rev(\vc{p},\Ts)
& = n \cdot T \cdot (1-F(T)) \cdot F(T)^{n-1}
\; \geq \; \quarter \cdot \rev^M_n.
\end{align*}
\end{lemma}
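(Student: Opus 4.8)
The plan is to verify the stated equality by a direct substitution and then to prove $\rev^M_n \le 4\,\rev(\vc p,\Ts)$, which is exactly the claimed bound $\rev(\vc p,\Ts)\ge\quarter\rev^M_n$. For the equality: $\rev(\vc p,\Ts)=\sum_i p\,(1-F(T_i))$; since $T_i=T$ for all $i$, $p=T\,F(T)^{n-1}=T(1-1/n)^{n-1}$, and $1-F(T)=1/n$, the sum equals $n\cdot T(1-1/n)^{n-1}\cdot\tfrac1n=T(1-1/n)^{n-1}$. In quantile language, with $T=F^{-1}(1-1/n)=n\,R(1/n)$, this reads $\rev(\vc p,\Ts)=n\,R(1/n)(1-1/n)^{n-1}$. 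To bound $\rev^M_n$, let $q^\star$ be the maximizer of the revenue curve $R$ and $R^\star:=R(q^\star)$; recall that regularity of $F$ means $R$ is concave, that $R(0)=0$ and $R\ge 0$, and that $F^{-1}(1-q^\star)$ is the Myerson reserve, so $R^\star$ is the single-agent optimal revenue. I would then split on whether $q^\star\le 1/n$ or $q^\star>1/n$.

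If $q^\star\le 1/n$, I would start from $\rev^M_n\le n\cdot r\cdot(1-F(r))=nR^\star$ (from the review section) and bound $R^\star$ by concavity: since $1/n\in[q^\star,1]$, the point $\bigl(1/n,R(1/n)\bigr)$ lies on or above the chord of $R$ through $(q^\star,R^\star)$ and $(1,R(1))$, and using $R(1)\ge 0$ this yields $R(1/n)\ge (1-\tfrac1n)R^\star$, i.e.\ $R^\star\le R(1/n)/(1-1/n)$. Hence $\rev^M_n\le nR(1/n)/(1-1/n)=T/(1-1/n)$, which is at most $4T(1-1/n)^{n-1}$ because $4(1-1/n)^{n}\ge 1$ for every $n\ge 2$.

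If $q^\star> 1/n$, the $nR^\star$ bound is hopelessly weak, so I would instead use the exact integral form of the Myerson revenue. Since $\phi$ is non-decreasing, $\max_i\phi(v_i)^+=\phi(\max_i v_i)^+$, and the change of variables $q=1-F(\cdot)$ together with $R'(q)=\phi(F^{-1}(1-q))$ (from the review section) rewrites $\rev^M_n=\Expect{\phi(\max_i v_i)^+}$ as $\rev^M_n=\int_0^{q^\star}R'(q)\,n(1-q)^{n-1}\,dq$, the integrand being exactly $R'$ on $[0,q^\star]$ since $R$ increases there. I would then cut this integral at $q=1/n$. On $[0,1/n]$ bound $n(1-q)^{n-1}\le n$, so that piece is at most $n\int_0^{1/n}R'(q)\,dq=nR(1/n)=T$. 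On $[1/n,q^\star]$ use that $R'$ is non-increasing (concavity) and the elementary estimate $R'(1/n)\le R(1/n)/(1/n)=T$ (from $R(0)=0$ and monotonicity of $R'$), so that piece is at most $T\int_{1/n}^{1}n(1-q)^{n-1}\,dq=T(1-1/n)^{n}$. Adding the pieces, $\rev^M_n\le T\bigl(1+(1-1/n)^{n}\bigr)$, and this is at most $4T(1-1/n)^{n-1}$: dividing through by $(1-1/n)^{n-1}$, the inequality to check is $(1-1/n)+(1-1/n)^{1-n}\le 4$, whose left side is below $1+e<4$.

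The opening substitution and the two one-variable inequalities at the end are routine. The step that requires care is the case $q^\star>1/n$: one must set up the quantile-space integral representation of $\rev^M_n$ correctly, and --- this is the crux --- notice that the right place to split the integral is precisely the quantile $1/n$, the very quantile from which the price $p$ was defined. That choice is what lets concavity of $R$ (equivalently, regularity of $F$) compress the tail of the integral into a geometric multiple of $nR(1/n)=T$; a different split, or dropping regularity, breaks the argument. I expect the fussiest part is making the two cases meet and confirming that the constant really is $4$ and not larger --- this is where the identity $F(T)=1-1/n$ earns its keep.
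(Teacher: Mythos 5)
Your proof is correct, and its overall architecture matches the paper's: both split into two cases according to whether the monopoly quantile $q^\star=1-F(r)$ lies below or above $1/n$ (equivalently, whether $T\le r$ or $T>r$), and in the first of these cases your argument --- the bound $\rev^M_n\le n\,r\,(1-F(r))$ combined with the concave-chord inequality $R(1/n)\ge(1-\tfrac1n)R(q^\star)$ using $R(1)\ge 0$ --- is essentially word-for-word the paper's. Where you genuinely diverge is the case $q^\star>1/n$ (the paper's case $T>r$). The paper argues probabilistically: it truncates $\E[\max_i\phi(v_i)^+]$ at the virtual value $\nu=\phi(T)$, bounds the low part by $\nu\cdot F(T)^n\le T/e$, and bounds the high part by $\sum_i\E[\phi(v_i)\one\{\phi(v_i)\ge\nu\}]=T$ via Myerson's lemma applied to the single-agent take-it-or-leave-it offer at $T$; this gives $\rev^M_n\le T(1+1/e)$ and a ratio of $\tfrac{1}{1+e}$. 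You instead pass to the exact quantile-space representation $\rev^M_n=\int_0^{q^\star}R'(q)\,n(1-q)^{n-1}\,dq$ (valid since regularity makes $\phi$ non-decreasing, so $\max_i\phi(v_i)=\phi(\max_i v_i)$), split the integral at $q=1/n$, and use concavity of $R$ to bound $R'$ on the tail by $nR(1/n)=T$, yielding $\rev^M_n\le T\bigl(1+(1-\tfrac1n)^n\bigr)$. The two routes give essentially the same constant (both converge to $\tfrac{1}{1+e}$, comfortably above $\quarter$), and both lean on regularity in the same place; your integral version is arguably more transparent about \emph{why} the split point $1/n$ is forced by the choice of $p$, while the paper's version avoids the change of variables and reuses Myerson's lemma, which it has already set up. Either is a complete proof.
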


\begin{emptyproof}
We use a variant of an argument by Chawla, Hartline and Kleinberg
\cite{chawla07}.
We distinguish between two cases, based on the relation of the Myerson
Reserve Price $r = \phi^{-1}(0)$ with $T$.

\begin{enumerate}
\item If $T > r$, we let $\nu = \phi(T) > 0$.
We can bound the Myerson Revenue as follows:
\begin{multline*}
\rev^M_n
= \Expect{\max_i \phi(v_i) \cdot \one \{ \max \phi (v_i) \geq 0 \}} \\
\leq \;
\nu \cdot \Prob{0 \leq \max_i \phi(v_i) \leq \nu}
+ \Expect{\max \phi(v_i) \cdot \one \{ \max \phi (v_i) \geq \nu \}}.
\end{multline*}
We bound each term separately.
For the first term, we have that
$\nu = \phi(T) \leq T$, and
\begin{align*}
\Prob{0 \leq \max_i \phi(v_i) \leq \nu}
& \leq  \Prob{\max_i v_i \leq T}
\; = \; F(T)^n
\; = \; (1-1/n)^n
\; \leq \; 1/e.
\end{align*}
For the second term, we have that
\begin{align*}
\Expect{\max \phi(v_i) \cdot \one \{ \max \phi (v_i) \geq \nu \}}
& \leq \Expect{\sum_i \phi(v_i) \one\{\phi(v_i) \geq \nu \}}.
\end{align*}
By Lemma~\ref{lem:myerson-lemma},
$\Expect{\phi(v_i) \one\{\phi(v_i) \geq \nu \}}$ is the
revenue of the single-agent mechanism that makes agent $i$ a
take-it-or-leave-it offer at price $T$; therefore,
\begin{align*}
\Expect{\sum_i \phi(v_i) \one\{\phi(v_i) \geq \nu \}}
& = \sum_i T \cdot \Prob{v_i \geq T}
\; = \; T \cdot \sum_i (1-F(T))
\; = \; T,
\end{align*}
by definition of $T$.
Combining the bound on the two terms, we get that
$\rev^M_n \leq T \cdot (1 + 1/e)$.
On the other hand, for the symmetric prices and symmetric equilibrium,
we have that
$\rev(\vc{p},\Ts) = n \cdot T \cdot (1-F(T)) \cdot F(T)^{n-1}
= T \cdot (1-1/n)^{n-1} \geq T/e$.
Therefore,
\begin{align*}
\rev(\vc{p},\vc{T})
& \geq \frac{1/e}{1 + 1/e} \rev^M_n
\; = \; \frac{1}{1+e} \rev^M_n
\; \approx \; 0.27 \cdot \rev^M_n.
\end{align*}

\item When $T \leq r$, we upper-bound $\rev^M_n$ as follows:
\begin{align*}
\rev^M_n
& = \Expect{\max \phi(v_i) \cdot \one \{ \max \phi (v_i) \geq 0 \}}
\; \leq \; \Expect{\sum_i \phi(v_i) \cdot \one \{ \phi (v_i) \geq 0 \}}\\
& =  n \cdot r \cdot (1-F(r)),
\end{align*}
where the final equality follows from the same argument about a single
buyer as above.
Let $q^M = 1 - F(r)$ be the probability that the valuation of an agent
with distribution $F$ is above the Myerson reserve price $r$.
Because $F$ is regular, as argued in Section~\ref{sec:review}, the
revenue curve $R(q)$ is a concave function.
By the definition of the Myerson Reserve Price as the maximizer of
expected revenue, $R$ is maximized at $q = q^M$.
Because we are in the case that $T \leq r$, we get that
$q^M = 1-F(r) \leq 1-F(T) = \frac{1}{n} < 1$.
We can therefore write $\frac{1}{n}$ as a convex combination
$\frac{1}{n} = \lambda \cdot q^M + (1-\lambda) \cdot 1$,
with $\lambda=\frac{1-\frac{1}{n}}{1-q^M}$.
The concavity of $R$, together with $R(1) = 0$, now implies that
$R(\frac{1}{n}) \geq \frac{1-\frac{1}{n}}{1-q^M} R(q^M)$.
On the other hand, $R(\frac{1}{n}) = T \cdot (1-F(T))$;
by combining these, we obtain
\begin{align}
\label{eq:eq1}
T \cdot (1-F(T))
& = R(1/n)
\; \geq \; \frac{1-\frac{1}{n}}{1-q^M} \cdot R(q^M)
\; \geq \; \left(1-\frac{1}{n}\right) \cdot r \cdot  (1-F(r)).
\end{align}
We can therefore bound the posted price revenue as
\begin{align*}
\rev(\vc{p},\Ts)
& = n \cdot T \cdot (1-F(T)) \cdot F(T)^{n-1}
\; = \; n \cdot T \cdot (1-F(T)) \cdot (1-1/n)^{n-1} \\
& \geq \; n \cdot r \cdot (1-F(r)) (1-1/n)^n
\; \geq \; \quarter \rev^M_n.
\end{align*}
The first inequality follows by Equation~\eqref{eq:eq1};
for the second inequality, we bound $(1-1/n)^n \geq \quarter$, and
use that the optimal revenue from selling a single item to $n$ agents
is at most $n$ times the optimal revenue from selling a single item to
one agent at the Myerson Reserve Price. \QED
\end{enumerate}
\end{emptyproof}

Having shown that the symmetric equilibrium has revenue within a
constant factor of the Myerson Revenue for a single item, it remains
to analyze the asymmetric equilibria.
(Recall that $p = T \cdot F(T)^{n-1}$, where $T$ is such that $F(T) = 1-\frac{1}{n}$.)

\begin{lemma} \label{lem:any-clique}
Let $\Ts = T \cdot \vc{1}$ be the symmetric equilibrium with threshold
$T$, and $\TP \in \nash_{\vc{p}}$ be an arbitrary equilibrium.
Then, $\rev(\vc{p}, \TP) \geq \Omega(1) \cdot \rev(\vc{p}, \Ts)$.
\end{lemma}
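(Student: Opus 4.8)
\emph{Proof proposal.} The plan is to establish the stronger quantitative statement $\rev(\vc{p},\TP) \ge \half\cdot\rev(\vc{p},\Ts)$. Both revenues equal $p$ times the expected number of buyers, and the symmetric equilibrium satisfies $\rev(\vc{p},\Ts) = n\cdot p\cdot(1-F(T)) = p$ since $F(T)=1-\tfrac1n$. So I would reduce the entire lemma to showing that in \emph{any} equilibrium $\TP = (T_1',\ldots,T_n')$ the expected number of buyers, $\sum_i (1-F(T_i'))$, is at least $\half$.

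The lever I would use is the rigidity of the clique equilibrium condition~\eqref{eqn:equilibrium-condition}, namely $T_i'\prod_{j\ne i}F(T_j') = p$ for all $i$. Writing $Q := \prod_{j} F(T_j')$ and dividing through by $F(T_i')$ gives the uniform relation $T_i' = p\,F(T_i')/Q$, hence the uniform upper bound $T_i' \le p/Q$ on every threshold. (One first checks that $F(T_i')\in(0,1]$ for every $i$ in any equilibrium: if some $F(T_k')=0$ then the product for an agent $j\ne k$ would vanish, contradicting $p>0$; this also shows $Q>0$ and every $T_i'$ is finite.)

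Next I would split into two cases according to the size of $Q$. If $Q\le\half$, the Weierstrass/union bound gives immediately $\sum_i (1-F(T_i')) \ge 1-\prod_i F(T_i') = 1-Q \ge \half$, with no further argument. If instead $Q>\half$, then the uniform bound reads $T_i' < 2p$ for all $i$; since $p = T\cdot(1-\tfrac1n)^{n-1}$ and $(1-\tfrac1n)^{n-1}\le\half$ (equivalently $(1+\tfrac1{n-1})^{n-1}\ge 2$, which holds for all $n\ge2$), this forces $2p\le T$ and hence $T_i'<T$ for every $i$. Monotonicity of $F$ then yields $1-F(T_i') \ge 1-F(T) = \tfrac1n$ for each $i$, so $\sum_i(1-F(T_i'))\ge 1$. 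Combining the two cases gives $\sum_i(1-F(T_i'))\ge\half$, and therefore $\rev(\vc{p},\TP)\ge\half\,p = \half\,\rev(\vc{p},\Ts)$.

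The substantive step is the second case: one must recognize that on the clique all equilibrium thresholds are forced to be comparable to $p$, and that the specific choice $p=T\cdot(1-\tfrac1n)^{n-1}$ is calibrated precisely so that $2p$ still lies below $T$ — this is the only place the exact value of $p$ is used, via the elementary inequality $(1-\tfrac1n)^{n-1}\le\half$. The handling of degenerate thresholds (some $F(T_i')=0$, or $T_i'=\infty$) is a minor technicality that $p>0$ dispatches in a line.
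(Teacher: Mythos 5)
Your proof is correct and follows essentially the same route as the paper: express the symmetric revenue as $p$, apply the union bound to get $\rev(\vc{p},\TP) \ge p\,(1-\prod_i F(T_i'))$, and use the clique equilibrium condition together with the calibration $(1-\tfrac1n)^{n-1}\le\tfrac12$ to control the case where $\prod_i F(T_i')$ is large. The only (cosmetic) difference is that where the paper closes that case by deriving a contradiction from $T_i'<T$ for all $i$, you instead use $T_i'<T$ to bound the expected number of buyers below by $1$ directly; both yield the same factor $\tfrac12$.
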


\begin{proof}
We express the revenue of the symmetric equilibrium as
$\rev(\vc{p}, \Ts) = p \cdot \sum_i (1-F(T)) = p$.
By the Union Bound, the revenue at the equilbrium \TP is lower-bounded
by
$\rev(\vc{p}, \TP) = \sum_i p \cdot (1-F(T'_i))
\geq p \cdot (1 - \prod_i F(T'_i))$.
We will prove that $\prod_i F(T'_i) \leq (1-1/n)^{n-1} \leq \half$,
which will imply that
$\rev(\vc{p}, \TP) \geq \half p = \half \rev(\vc{p}, \Ts)$.
For contradiction, assume that $\prod_i F(T'_i) > (1-1/n)^{n-1}$.

Using that $p = T \cdot F(T)^{n-1} = T \cdot (1-1/n)^{n-1}$,
applying the equilibrium condition
\eqref{eqn:equilibrium-condition} to $\TP$,
and using our contradiction assumption,
we get that for all $j$,
\begin{align*}
T'_j
& = \frac{p}{\prod_{i \neq j} F(T'_i)}
\; = \; \frac{T \cdot (1-1/n)^{n-1}}{\prod_{i \neq j} F(T'_i)}
\; \leq \; T \cdot \frac{(1-1/n)^{n-1}}{\prod_{i} F(T'_i)}
\; < \; T \cdot \frac{(1-1/n)^{n-1}}{(1-1/n)^{n-1}}
\; = \; T.
\end{align*}
Thus, $T'_j < T$ for all $j$, implying that $F(T'_j) \leq F(T)$ as well.
But this contradicts that
$T'_j \cdot \prod_{i \neq j} F(T'_i)
= p = T \cdot \prod_{i \neq j} F(T)$, completing the proof.
\end{proof}

\section{Pricing Locally Public Goods}
\label{sec:simultaneous}

In this section, we turn to scenarios in which the good is not
completely public. That is, the graph $G$ is not
necessarily complete; rather, $G$ is an arbitrary network, and 
agents share benefits only with neighbors in $G$.
We refer to such a good as \emph{locally public}.

We first analyze the case when $G$ is $d$-regular, for some arbitrary
$d \geq 1$.  For such graphs, we describe how to explicitly calculate
prices that are approximately revenue-optimal, in the worst case over
equilibria of agent behavior.  
We then consider the case of general networks.
We present evidence that the pricing problem for general graphs is
substantially more difficult, and that the approaches used in previous
cases cannot be extended to handle the general case, 
even for uniform distributions.
Nevertheless, we show that approximately optimal prices
can be found in the special case that agent values are drawn from the uniform
distribution.

\subsection{$d$-Regular Graphs}

We consider the problem of pricing locally public goods
when the underlying graph is $d$-regular;
i.e., $\abs{N(i)} = d$ for every $i \in [n]$.
As before, we assume that the value distribution $F$ of the agents is
atomless and regular.

Recall that in the case of globally public goods
(Section~\ref{sec:global}), we showed how to compute a price for which
the revenue at the \emph{worst} equilibrium is a good approximation to the revenue
at the \emph{best} equilibrium for any price vector.
In other words, $p$ is such that
$$
 \min_{\Ts \in
\nash_{p}}
\rev(p \cdot \vc{1}, \Ts) \geq \Omega(1) \cdot \max_{\vc{p}} \max_{\Ts \in
\nash_{\vc{p}}} \rev(\vc{p},\Ts).
$$
One might hope for a similar result for locally public goods. 
Unfortunately, we show that this is not possible even for $d$-regular networks:
in Example \ref{ex:d-reg-revenue}, we give an instance of a
$d$-regular graph for which the gap in revenue between different
equilibria is linear in $d$.
The same example also shows that for $d$-regular graphs, we cannot find
a single price that is competitive against non-uniform price vectors.
Thus, unlike for globally public goods, a constant-factor 
revenue approximation for $d$-regular graphs must specifically compare
revenue-minimizing equilibria at given price vectors.

We establish the existence of a price $p$ that depends only on
the degree $d$ and the distribution $F$, but not on the particular
structure of $G$, such that when $p$ is offered to all the agents, 
the seller obtains a constant fraction of the \emph{worst-case}
revenue at \emph{any} price.
In other words, we establish the existence of a price $p = p(d,F)$
such that
$$\min_{\Ts \in
\nash_{p}}
\rev(p \cdot \vc{1}, \Ts) \geq \Omega(1) \cdot \max_{p'} \min_{\Ts \in
\nash_{p'}}
\rev(p' \cdot \vc{1},\Ts).\\$$



{We emphasize that the key aspect here is that $p$ is
independent of the actual network structure of $G$, and that it can be
computed efficiently from $F$ and $d$.}

\begin{theorem}
\label{thm:main-d-regular}
In the locally public good setting with $d$-regular graphs,
let $p = F^{-1}(1-1/d) \cdot (1-1/d)^{d}$.
Then, if the price $p$ is offered to all agents, the worst-case
revenue among the equilibria $\Ts \in \nash_{p}$ is at least a
constant fraction of the revenue of the worst equilibrium for the best
\emph{network-specific} uniform price to offer the agents.
\end{theorem}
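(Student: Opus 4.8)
The plan is to mirror the proof of Theorem~\ref{thm:main-clique}, with the clique replaced by the $d$-regular graph $G$ and with the best-case comparison replaced by a worst-case one. Abbreviate $T^\star = F^{-1}(1-1/d)$, so the price in the statement is $p = T^\star(1-1/d)^d$, and observe that $T^\star\cdot\vc{1}$ is the symmetric equilibrium at price $p$ (since $T^\star F(T^\star)^d = T^\star(1-1/d)^d = p$), with revenue $\rev(p\cdot\vc{1}, T^\star\cdot\vc{1}) = np(1-F(T^\star)) = \tfrac{n}{d}\,p = n\,R(1/d)(1-1/d)^d$, where $R$ is the revenue curve. I will use $\tfrac{n}{d}\,p$ as the common yardstick, and argue: (a) every equilibrium at price $p$ earns $\Omega(\tfrac{n}{d}\,p)$, and (b) for every network-specific uniform price $p'$ there is an equilibrium earning only $O(\tfrac{n}{d}\,p)$; together these give the theorem.

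For (a) I would prove a rigidity statement extending Lemma~\ref{lem:any-clique}. Given $\Ts\in\nash_{p}$, set $S = \{i : T_i < T^\star\}$. If $S=\emptyset$, then $F(T_i)\ge 1-1/d$ for every $i$, so $\prod_{j\in N(i)}F(T_j)\ge(1-1/d)^d$ and the equilibrium condition forces $T_i = p/\prod_{j\in N(i)}F(T_j)\le T^\star$ for all $i$; hence $\Ts = T^\star\cdot\vc{1}$ and the revenue equals $\tfrac{n}{d}\,p$. If $S\neq\emptyset$, a propagation argument shows $S$ cannot be small: any vertex $v$ such that neither $v$ nor a neighbour of $v$ lies in $S$ must have $T_v = T^\star$ exactly, and then, since $\prod_{j\in N(v)}F(T_j)=p/T^\star=(1-1/d)^d$ can only be attained when every factor equals $1-1/d$, every neighbour of $v$ must also lie outside $S$ with threshold $T^\star$; iterating, the $S$-free vertices form a union of whole connected components carrying the symmetric equilibrium, so on each remaining component $S$ is a dominating set. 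Each $i\in S$ then buys with probability $>1/d$, so $\rev(p\cdot\vc{1},\Ts)\ge p\sum_{i\in S}(1-F(T_i))$; this already gives $\Omega(\tfrac{n}{d^2}\,p)$, and a sharper count — invoking regularity of $F$ through the concavity of $R$, which limits how flat $F$ can be on $[p,T^\star]$ so that the probabilities $1-F(T_i)$, $i\in S$, are not all pinned near $1/d$ — is what brings it up to $\Omega(\tfrac{n}{d}\,p)$.

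For (b), given $p'$ I would construct a ``sparse'' equilibrium: a small set $D$ of designated buyers with thresholds near $p'$, while the remaining agents get high thresholds (driven up because each has a neighbour in $D$ contributing a small $F$-value), so that only $O(\lvert D\rvert)=O(n/d)$ agents buy with non-negligible probability; existence of the restricted fixed point on $D$ is again Brouwer, as in Section~\ref{sec:models}. Regularity of $F$ enters twice here: it makes the construction \emph{valid} (the agents outside $D$ indeed prefer not to buy, which can fail for badly shaped $F$), and it ensures that per designated buyer the contribution $p'(1-F(T_i))$, being a value of the revenue curve, is $O(R(1/d))$ \emph{uniformly in} $p'$, so the revenue is $O(\tfrac{n}{d}\,R(1/d)) = O(\tfrac{n}{d}\,p)$. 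The main obstacle is exactly this reconciliation: lower-bounding the worst equilibrium at $p$ while upper-bounding the worst equilibrium at an \emph{arbitrary} $p'$ so that the two match up to an absolute constant, and — as in Equation~\eqref{eq:eq1} of the clique analysis — it is the regularity of $F$, via concavity of $R$, that makes both sides collapse to the same order $\tfrac{n}{d}\,p$. (That one cannot hope to compare against best-case revenue instead is precisely what Example~\ref{ex:d-reg-revenue} rules out.)
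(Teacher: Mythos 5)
Your high-level architecture is the right one and matches the paper's: lower-bound \emph{every} equilibrium at $p$ by $\Omega(\tfrac{n}{d}p)$, and exhibit, for each $p'$, \emph{some} equilibrium at $p'$ with revenue $O(\tfrac{n}{d}p)$. But both of your two steps have genuine gaps. In (a), the propagation argument breaks for $d\ge 3$: if $v$ is $S$-free you do get $T_v=T^\star$ and $\prod_{j\in N(v)}F(T_j)=(1-1/d)^d$, and since each factor is at least $1-1/d$ each must equal $1-1/d$, so the neighbours are not in $S$ --- fine. But to iterate you need a neighbour $j$ to be $S$-free itself, i.e., you need $\prod_{k\in N(j)}F(T_k)=(1-1/d)^d$ to force every factor to equal $1-1/d$; it does not, because factors below $1-1/d$ can be compensated by factors above it (the ``every factor is exactly $1-1/d$'' conclusion is only available when you already know every factor is at least $1-1/d$). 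So the $S$-free vertices need not form whole components, and $S$ need not dominate the rest. Even granting domination, you only reach $\Omega(\tfrac{n}{d^2}p)$: a dominating set can have size $\approx n/(d+1)$, each member may buy with probability barely above $1/d$, and the $\Theta(n)$ dominated vertices may contribute essentially nothing; the promised factor-$d$ upgrade ``via regularity limiting how flat $F$ is'' is not an argument and does not address this loss. Notably, the paper's Lemma~\ref{lemma:regular_symmetric_assymmetric} uses no regularity here at all: it averages the two valid expressions for revenue, $\sum_i p(1-F(T'_i))$ and $\tfrac{p}{d}\sum_i\bigl(1-\prod_{j\in N(i)}F(T'_j)\bigr)=\tfrac{p}{d}\sum_i\bigl(1-T F(T)^d/T'_i\bigr)$, and observes that for each $i$ one of the two terms is at least $\tfrac{1}{d}(1-1/e)$, according to whether $T'_i\le T$ or $T'_i\ge T$. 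You should replace the rigidity argument with something of this averaging type.

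In (b), the sparse-equilibrium construction is both unnecessary and unsupported. Unnecessary, because the symmetric equilibrium at $p'$ always exists for $d$-regular graphs with uniform prices (Section~\ref{sec:models}), so it already witnesses $\min_{\Ts\in\nash_{p'}}\rev(p'\cdot\vc{1},\Ts)\le n\,T'(1-F(T'))F(T')^d\le \tfrac{n}{d}\rev^M_d$ via the sequential-posted-price interpretation. Unsupported, because you neither establish that a designated set $D$ with the required fixed-point structure exists in an arbitrary $d$-regular graph, nor that the $\Theta(n)$ non-designated agents each contribute only $O(p/d)$; and the uniform per-buyer bound ``$p'(1-F(T_i))=O(R(1/d))$'' is false: for the uniform distribution a buyer at the monopoly price contributes $R(1/2)=1/4$ while $R(1/d)\approx 1/d$. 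The correct uniform per-buyer bound is $R(q^M)$, and reconciling $\tfrac{n}{d}R(q^M)$ with $\tfrac{n}{d}p=nR(1/d)(1-1/d)^d$ is precisely the content of the paper's Lemma~\ref{lemma:regular_symmetric_approx}, which routes the comparison through the prophet price: when $T$ exceeds it, both sides are compared to $\tfrac{n}{d}$ times that price; when $T$ is below it, concavity of $R$ (the one place regularity enters) closes the gap, as in Equation~\eqref{eq:eq1}. As written, your chain of inequalities does not close up to an absolute constant.
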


Our approach to proving Theorem \ref{thm:main-d-regular} is the following.
We first study the symmetric equilibria of the game.
We know from Section \ref{sec:models} that every uniform price vector
$p \cdot \vc{1}$ admits a symmetric equilibrium.
We consider a price $p$ for which, in the corresponding symmetric equilibrium,
each player buys with probability $\frac{1}{d}$.  
In Lemma \ref{lemma:regular_symmetric_approx}, we
show that the revenue of this symmetric equilibrium is a constant
fraction of the revenue of any other symmetric equilibrium (across all
potential prices). 
In particular, this implies that the worst-case revenue at any other
price is at most a constant factor larger than the revenue of the
symmetric equilibrium at price $p$.
Then, in Lemma \ref{lemma:regular_symmetric_assymmetric}, we show that for this
particular price $p$, \emph{every} equilibrium generates at least a
constant fraction of the revenue of the symmetric equilibrium.

For the remainder of this section,
we fix $T$ such that $F(T) = 1-\frac{1}{d}$,
and $p = T \cdot F(T)^d = T \cdot (1-1/d)^{d}$.
Let $\vc{p} = p \cdot \vc{1}$ be the vector in which all agents are
offered $p$.

\begin{lemma}\label{lemma:regular_symmetric_approx}
Consider a locally public goods problem in which the underlying network is a
$d$-regular graph and agents have valuations drawn i.i.d.~from a regular
distribution $F$. 
Let $\vc{T}$ be the symmetric equilibrium with threshold $T$. Then, 
$$\rev(\vc{p}, \vc{T}) \geq \Omega(1)
\cdot \rev(p'\cdot \vc{1}, T' \cdot \vc{1}),$$  
for all prices $p'$ and threshold vectors $T' \cdot \vc{1}$
corresponding to the symmetric equilibrium with price $p'$.
\end{lemma}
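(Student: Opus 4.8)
The plan is to mimic the structure of the proof of Lemma~\ref{lem:const-approx}, but with $d$ playing the role of $n$, since the symmetric equilibrium for a $d$-regular graph behaves locally exactly like the symmetric equilibrium on $K_{d+1}$. Concretely, I would first parametrize symmetric equilibria by their ``purchase probability.'' If $T' \cdot \vc{1}$ is the symmetric equilibrium at price $p'$, then writing $q' = 1-F(T')$, the equilibrium condition \eqref{eqn:equilibrium-condition} gives $p' = T' \cdot F(T')^d = F^{-1}(1-q') \cdot (1-q')^d$, and the revenue is $\rev(p'\cdot\vc{1}, T'\cdot\vc{1}) = n \cdot p' \cdot q' = n \cdot q' \cdot (1-q')^d \cdot F^{-1}(1-q')$. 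In terms of the revenue curve $R(q) = q\cdot F^{-1}(1-q)$, this is $n \cdot (1-q')^d \cdot R(q')$. So the quantity to be maximized over all symmetric equilibria is $g(q) := (1-q)^d \cdot R(q)$, and our chosen price $p$ corresponds to $q = 1/d$. The goal reduces to showing $g(1/d) \geq \Omega(1) \cdot g(q)$ for every $q \in [0,1]$.

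The second step is to prove this inequality using regularity, i.e., concavity of $R$. I would split into two cases exactly as in Lemma~\ref{lem:const-approx}, according to whether the Myerson reserve quantile $q^M = 1-F(r)$ (where $r = \phi^{-1}(0)$ maximizes $R$) satisfies $q^M \geq 1/d$ or $q^M < 1/d$. In the first case ($q^M \geq 1/d$), for any competing quantile $q$ I would bound $(1-q)^d R(q)$ by splitting on whether $q \le 1/d$ or $q > 1/d$: for $q \le 1/d$, use $R(q) \le R(q^M)$ is too weak, so instead bound $R(q) \le \frac{q}{1/d} R(1/d) = dq\, R(1/d)$ by concavity with $R(0)=0$, together with $(1-q)^d \le 1$, and observe $d q (1-q)^d$ is bounded (it is maximized near $q \approx 1/d$, giving a constant times $(1-1/d)^d \ge 1/4$), hence comparable to $g(1/d) = (1-1/d)^d R(1/d)$; for $q > 1/d$ one has $(1-q)^d \le (1-1/d)^d$ and $R(q) \le R(q^M)$, and concavity of $R$ on $[1/d, q^M]$ versus... this needs care. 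In the second case ($q^M < 1/d$), concavity of $R$ with $R(1)=0$ and the maximizer at $q^M < 1/d$ forces $R$ to be decreasing on $[q^M, 1]$, so $R(q) \le R(q^M) \le \frac{d}{d-1}\cdot\frac{1}{?}$... more precisely, $R(1/d) \ge \frac{1 - 1/d}{1 - q^M} R(q^M) \ge (1-1/d) R(q^M) \ge (1-1/d) R(q)$ for all $q \ge q^M$, while for $q < q^M$ we use monotonicity $R(q) \le R(q^M)$; combined with $(1-1/d)^d \ge 1/4$, this yields the bound.

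The main obstacle, and the step I would spend the most effort on, is the first case — making the comparison clean when the competing equilibrium has purchase probability $q$ \emph{larger} than $1/d$ but the revenue curve is still increasing there (i.e., $q \le q^M$). Here $(1-q)^d$ can be much smaller than $(1-1/d)^d$, so one cannot just compare the $R$ values; instead one must use concavity to show $R(q) \le R(1/d) \cdot \frac{q(1 - q^M)}{(1/d)(1-q^M)}$-type bounds or interpolate, and then verify that the extra factor from $R$ growing is dominated by the shrinkage of $(1-q)^d$. The cleanest route is probably to prove a single clean inequality: for all $q\in[0,1]$, $(1-q)^d R(q) \le C \cdot (1-1/d)^d R(1/d)$ for an absolute constant $C$, by observing that $(1-q)^d R(q) \le (1-q)^d \cdot \max(dq, 1) \cdot R(1/d)$ when $R$ is increasing up to $1/d$ — wait, this uses concavity only on $[0,1/d]$. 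I would combine $R(q) \le dq\,R(1/d)$ for $q \le 1/d$ (concavity through the origin), and $R(q) \le \frac{1-q}{1-1/d}\cdot\frac{1}{?}$... the correct concavity statement going the other way is $R(q) \le \frac{1-q}{1-1/d} R(1/d) + \frac{q - 1/d}{1 - 1/d} R(1)= \frac{1-q}{1-1/d} R(1/d)$ for $q \ge 1/d$, using $R(1) = 0$ and that the chord from $(1/d, R(1/d))$ to $(1,0)$ lies below $R$ only if... no — concavity says $R$ lies \emph{above} its chords, so this gives a lower bound, not upper. The genuinely subtle point is that for $q \in (1/d, q^M)$ there is no elementary upper bound on $R(q)/R(1/d)$ from concavity alone beyond $R(q) \le R(q^M)$; so one must exploit $(1-q)^d \le (1-1/d)^d e^{-(q-1/d)d}$-type exponential decay to kill any polynomial-in-$d$ growth of $R(q)/R(1/d)$, and then bound $R(q^M)/R(1/d) \le$ (something like) $d$ crudely via $R(q^M) \le \max_q R(q)$ and $R(1/d) \ge \frac1d \cdot (\text{single-agent reserve revenue})$. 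Pinning down this exponential-versus-polynomial tradeoff with explicit constants is where the real work lies; everything else is a transcription of Lemma~\ref{lem:const-approx} with $n \mapsto d$.
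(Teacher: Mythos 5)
Your reduction of the lemma to the single inequality $(1-1/d)^d R(1/d) \ge \Omega(1)\cdot (1-q)^d R(q)$ for all $q\in[0,1]$ is correct (the revenue of the symmetric equilibrium at quantile $q$ is exactly $n(1-q)^d R(q)$), and it is a genuinely different route from the paper's: the paper never works with the function $(1-q)^d R(q)$ directly, but instead upper-bounds each symmetric equilibrium's revenue by $\frac{n}{d}$ times the revenue of a \emph{sequential posted-price mechanism} on $d$ agents, invokes the Prophet Inequality to obtain a reference price $T^*$ that $2$-approximates the $d$-agent Myerson revenue, and then case-splits on whether $T\gtrless T^*$. Your plan would dispense with the Prophet Inequality entirely --- if the inequality were actually proved.

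It is not, and the gap sits exactly where you flag it, compounded by a concavity bound stated in the wrong direction. Concavity of $R$ together with $R(0)\ge 0$ implies that $R(q)/q$ is \emph{non-increasing}; hence for $q\le 1/d$ it yields $R(q)\ge dq\,R(1/d)$, the reverse of the upper bound $R(q)\le dq\,R(1/d)$ you invoke in that regime. (For $q\le 1/d$ with $q^M\ge 1/d$ no such bound is needed: $R$ is non-decreasing on $[0,q^M]$, so $R(q)\le R(1/d)$, and $(1-q)^d\le 1\le 4(1-1/d)^d$ finishes that subcase.) More importantly, your claim that for $q\in(1/d,q^M)$ ``there is no elementary upper bound on $R(q)/R(1/d)$ from concavity alone beyond $R(q)\le R(q^M)$'' is false, and the exponential-versus-polynomial tradeoff you propose instead is both unnecessary and not obviously sufficient (e.g.\ at $q=2/d$ the factor $(1-q)^d$ has only decayed by a constant, while your crude bound $R(q^M)\le d\,R(1/d)$ loses a factor $d$). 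The monotonicity of $R(q)/q$ gives, for \emph{every} $q\ge 1/d$, the clean upper bound $R(q)\le dq\,R(1/d)$; since $dq(1-q)^d\le \left(\frac{d}{d+1}\right)^{d+1}\le 1$ for all $q\in[0,1]$, this yields $(1-q)^d R(q)\le R(1/d)\le 4(1-1/d)^d R(1/d)$ on that entire range, with no reference to $q^M$. Combined with the easy subcase above and the case $q^M<1/d$ (which you handle correctly, mirroring Lemma~\ref{lem:const-approx}), this closes the argument with an explicit constant. As written, however, the proposal leaves its central case unresolved and leans on a reversed inequality, so it does not yet constitute a proof.
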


\begin{proof}
Let $\rev^M_d$ be the revenue obtained by Myerson's mechanism for
selling one item to $d$ players with i.i.d.~valuations drawn according
to $F$. The Prophet Inequality (Section~\ref{sec:review}) guarantees
that there exists a price $T^*$ such that a 
sequential posted-prices mechanism with price $T^*$ offered to all agents 
guarantees at least half of the Myerson Revenue $\rev^M_d$.
On the other hand, the Myerson Revenue is optimal, and therefore
clearly serves as an upper bound on the revenue that can be obtained
by any price $T$ of the sequential posted prices mechanism.
In summary, there exists a $T^*$ such that
\begin{equation}
\sum_{i=1}^d T^* (1-F(T^*)) F(T^*)^{i-1} \geq \half \cdot \rev^M_d
\geq \half \cdot \Big( \sum_{i=1}^d T' (1-F(T')) F(T')^{i-1} \Big),
\quad \text{ for all } T'.
\label{eq:prophet}
\end{equation}

We distinguish between two cases, based on the relation of $T^*$ with $T$.

\begin{enumerate}
\item If $T > T^*$, then given any price $p'$ and corresponding
symmetric equilibrium $T'$,
\begin{align*}
\rev(p'\cdot \vc{1}, T' \cdot \vc{1}) 
& = n \cdot T' \cdot (1-F(T')) \cdot F(T')^d 
\; = \; \frac{n}{d} \left(d \cdot T' \cdot (1-F(T')) \cdot F(T')^d \right) \\
& \leq \frac{n}{d} 
\Big( \sum_{i=1}^d T' \cdot (1-F(T')) \cdot F(T')^{i-1} \Big).
\end{align*}
Using both sides of Equation~\eqref{eq:prophet}, we get: 
\begin{align}
\rev(p'\cdot \vc{1}, T' \cdot \vc{1})  
& \leq \frac{n}{d} \cdot \rev^M_d 
\; \leq \; \frac{2n}{d} \cdot
\Big( \sum_{i=1}^d T^* (1-F(T^*)) F(T^*)^{i-1} \Big) 
\; \leq \; \frac{2 n}{d}\cdot T^*.
\label{eqn:comparison-star}
\end{align}

Next, we establish a lower bound on $\rev(\vc{p}, \vc{T})$.
\[
\rev(\vc{p}, \vc{T}) 
=  n \cdot T \cdot (1-F(T)) \cdot F(T)^d 
= \frac{n}{d} (1-1/d)^d \cdot T \geq \frac{n}{4d} \cdot T^* 
\geq \frac{1}{8} \cdot \rev(p'\cdot \vc{1}, T' \cdot \vc{1}).
\]
For the first inequality, we bound  $\left( 1- \frac{1}{d} \right)^d \geq
\frac{1}{4}$, and use that $T \geq T^*$, by the assumption of case (1); the
second inequality follows from Inequality \eqref{eqn:comparison-star}.

\item 
\noindent If $T \leq T^*$, then let $q^* = 1 - F(T^*) \leq 1
- F(T) = \frac{1}{d}$. Similar to the proof of
Lemma~\ref{lem:const-approx},
we use the concavity of the revenue function 
$R(q) = q \cdot F^{-1}(1-q)$ to derive that
\begin{equation}
T \cdot (1-F(T)) \geq (1-1/d) \cdot T^* \cdot (1-F(T^*)).
\label{eq:concavity}
\end{equation}
It follows that
\begin{align*}
\rev(\vc{p}, \vc{T}) 
& =  n \cdot T \cdot (1-F(T)) \cdot F(T)^d  
\; = \; n \cdot (1-1/d)^d \cdot T \cdot (1-F(T)) \\
& \geq n (1-1/d)^{d+1} \cdot T^* \cdot (1-F(T^*)) \\
& \geq (1-1/d)^{d+1} \cdot \frac{n}{d} \cdot
\Big( \sum_{i=1}^d T^* (1-F(T^*)) F(T^*)^{i-1} \Big) \\ 
& \geq \frac{n}{16 d} \cdot 
\Big(\sum_{i=1}^d T' (1-F(T')) F(T')^{i-1} \Big) \\
& \geq \frac{n}{16} \cdot T' \cdot (1-F(T')) \cdot F(T')^d 
\; = \; \frac{1}{16} \cdot \rev(p'\cdot \vc{1}, T' \cdot \vc{1}).
\end{align*}
The first inequality follows by Inequality~\eqref{eq:concavity};
the second inequality holds because $ F(T^*)^{i-1} \leq 1$ for every $i$;
the third inequality follows by Inequality~\eqref{eq:prophet}, 
and by bounding $(1-1/d)^{d+1} \geq \frac{1}{8}$;
and for the final inequality, we use that $i \leq d$ for every $i$.
\end{enumerate}
The assertion of the lemma follows.
\end{proof}

Having shown that the symmetric equilibrium associated with price $p$ generates
a  good approximation to the optimal revenue  attainable at symmetric
equilibria, we now show that there are no other (asymmetric)  equilibria
associated with price $p$ that generate significantly less revenue.

\begin{lemma}\label{lemma:regular_symmetric_assymmetric}
Let $\vc{T} = T \cdot \vc{1}$ be the symmetric threshold vector
associated with price $p$. 
Then, $\rev(\vc{p}, \vc{T}') \geq \Omega(1) \cdot \rev(\vc{p}, \vc{T})$
for any threshold vector $\vc{T'} \in \nash_{\vc{p}}$.
\end{lemma}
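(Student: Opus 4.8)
The plan is to adapt the template of Lemma~\ref{lem:any-clique}, replacing its global argument by a local, per-vertex one that exploits $d$-regularity. First I would rephrase everything in terms of per-agent purchase probabilities. Write $q'_i := 1 - F(T'_i)$; then the equilibrium revenue is $\rev(\vc{p},\vc{T}') = \sum_i p\, q'_i = p\sum_i q'_i$, while the symmetric equilibrium has $\rev(\vc{p},\vc{T}) = \sum_i p\,(1-F(T)) = \tfrac{n}{d}\,p$ since $F(T) = 1-\tfrac1d$. So the lemma reduces to showing $\sum_i q'_i \ge \Omega(n/d)$ --- that the \emph{average} purchase probability in any equilibrium at price $p$ is $\Omega(1/d)$. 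It is worth observing why the clique-style move fails here: there one uses $\sum_i q'_i \ge 1 - \prod_i F(T'_i)$, but the quantity we must bound now grows with $n$, so that estimate is far too weak, and a vertex-by-vertex bound is needed.

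The heart of the proof is the per-vertex inequality
\[
d\, q'_i \;+\; \sum_{j \in N(i)} q'_j \;\ge\; \tfrac12 \qquad \text{for every agent } i,
\]
which I would prove by a short case split. If $\sum_{j\in N(i)} q'_j \ge \tfrac12$ the bound is immediate. Otherwise $\sum_{j\in N(i)} q'_j < \tfrac12$, so by the elementary inequality $\prod_j (1 - x_j) \ge 1 - \sum_j x_j$ we get $\prod_{j\in N(i)} F(T'_j) = \prod_{j\in N(i)} (1 - q'_j) > \tfrac12$. Feeding this into the equilibrium condition~\eqref{eqn:equilibrium-condition}, and using $p = T(1-1/d)^d$ together with the standard fact $(1-1/d)^d \le 1/e$, gives
\[
T'_i \;=\; \frac{p}{\prod_{j\in N(i)} F(T'_j)} \;<\; 2p \;=\; 2(1-1/d)^d\, T \;\le\; \tfrac{2}{e}\, T \;<\; T .
\]
Since $F$ is non-decreasing, $F(T'_i) \le F(T) = 1 - \tfrac1d$, hence $q'_i \ge \tfrac1d$ and $d\, q'_i \ge 1$, so the inequality holds in this case as well.

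To finish I would sum the per-vertex inequality over all $i$; by $d$-regularity each $q'_j$ is counted in exactly $d$ of the neighbor-sums, so $\tfrac n2 \le \sum_i \big(d\, q'_i + \sum_{j\in N(i)} q'_j\big) = 2d \sum_i q'_i$, giving $\sum_i q'_i \ge \tfrac{n}{4d}$ and therefore $\rev(\vc{p},\vc{T}') \ge \tfrac{np}{4d} = \tfrac14 \rev(\vc{p},\vc{T})$. The step I expect to take the most care is the per-vertex inequality, especially on non-interior equilibria: if some agent $k$ deterministically buys (so $F(T'_k) = 0$ and $q'_k = 1$), then every neighbor $i$ of $k$ has threshold $T'_i = \infty$ and the equilibrium condition for $i$ degenerates --- but such an $i$ has $\sum_{j\in N(i)} q'_j \ge q'_k = 1$, so it falls into the first case and the inequality survives. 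Everything else --- the two revenue identities, the product inequality, and $(1-1/d)^d \le 1/e$ --- is routine. Together with Lemma~\ref{lemma:regular_symmetric_approx}, this establishes Theorem~\ref{thm:main-d-regular}.
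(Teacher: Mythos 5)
Your proof is correct, and it is essentially the paper's argument in a slightly different packaging: both rest on the same dichotomy (either $T'_i \le T$, forcing $q'_i \ge 1/d$, or the equilibrium condition plus the union bound forces the neighborhood of $i$ to buy with constant probability) and both aggregate via the handshake identity for $d$-regular graphs. The paper organizes this as an average of two lower bounds on the revenue with a case split on $T'_i \lessgtr T$, obtaining the constant $\tfrac{1}{2}(1-1/e)$, while your per-vertex inequality $d\,q'_i + \sum_{j\in N(i)} q'_j \ge \tfrac12$ yields $\tfrac14$; the substance is the same.
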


\begin{emptyproof}
For any equilibrium $\vc{T}' \in \nash_{\vc{p}}$,
$$\rev(\vc{p}, \vc{T}') = \sum_i p (1-F(T'_i)) = \frac{p}{d} \sum_i \sum_{j
\in N(i)}
(1-F(T'_j)) \geq \frac{p}{d} \sum_i \Big( 1 - \prod_{j \in N(i)} F(T'_j)
\Big),
$$
where the last inequality follows by applying the union bound.
By the equilibrium conditions, 
$T'_i \cdot \prod_{j \in N(i)} F(T'_j) = p = T \cdot F(T)^d$
for all $i$; therefore, 
$1 - \prod_{j \in N(i)} F(T'_j) = 1- T \cdot F(T)^d / T'_i$.
We get that
$
\rev(\vc{p}, \vc{T}') \geq \frac{p}{d} \sum_i \left( 1- \frac{T \cdot F(T)^d}{T'_i} \right).
$
From the last inequality and the equality 
$\rev(\vc{p}, \vc{T}') = \sum_i p (1-F(T'_i))$, we can bound 
$\rev(\vc{p}, \vc{T}')$ as follows: 
\[
\rev(\vc{p}, \vc{T}') \geq 
p \cdot \sum_i \half \left[ ( 1- F(T'_i)) 
  + \frac{1}{d} \left( 1-  \frac{T \cdot F(T)^d}{T'_i}\right)
\right]. 
\]
Focus on one term $i$ of the sum.
The first term in brackets is decreasing in $T'_i$, while the second
is increasing in $T'_i$.
Consequently, we distinguish between two cases: 
(i) If $T'_i \leq T$, then $1- F(T'_i) \geq 1-F(T) = \frac{1}{d}$. 
(ii) If $T'_i \geq T$, then
\[
\frac{1}{d} \left(1-  \frac{T \cdot F(T)^d}{T'_i} \right)
\geq \frac{1}{d} \left( 1 - (1-1/d)^d \right) 
\geq \frac{1}{d} \cdot \left( 1-\frac{1}{e} \right).
\]
Thus, summing over all $i$, 
$\rev(\vc{p}, \vc{T}') \geq p \cdot \sum_i \half \left( \frac{1}{d}
\cdot \left( 1-\frac{1}{e}\right)  \right)$,
implying that
$$\rev(\vc{p}, \vc{T}') 
\geq p \cdot \sum_i \half \left( \frac{1}{d}
\cdot (1-1/e)  \right) = \Omega(1) \cdot p \cdot
\frac{n}{d} = \Omega(1) \cdot \rev(\vc{p}, \vc{T}). \QED$$
\end{emptyproof}

We now show that comparing against the best worst-case revenue,
rather than the best revenue in equilibrium, is a necessity rather than
an artifact of our analysis.

\begin{example}[Revenue gap]
\label{ex:d-reg-revenue}
Consider an instance with $n$ players whose valuations are drawn
i.i.d.~from the uniform distribution with support $[0,1]$.
Let the underlying network be a $d$-regular bipartite
graph with $\frac{n}{2}$ nodes on each side. 
We showed in Lemma~\ref{lemma:regular_symmetric_approx} that the best
worst-case revenue is upper bounded by 
$\frac{n}{d} \cdot \rev^M_d \leq \frac{n}{d}$
(where $\rev^M_d$ is the revenue obtained by Myerson's mechanism for
selling one item to $d$ players).

Now, consider the following equilibrium $\vc{T} \in \nash_{1/2}$ in the
bipartite graph: the nodes on one side buy whenever their
value exceeds the price, while the nodes on the other side
always free-ride.
That is, $T_i = \half$ for each player $i$ on the left, and 
$T_i = 2^{d-1} > 1$ for each player $i$ on the right.
This equilibrium generates a revenue of 
$\frac{n}{2} \cdot \half \cdot (1-\half) = \frac{n}{8}$.
The gap between the best worst-case revenue and the best revenue can
therefore be as large as 
$\frac{n}{8} \cdot (\frac{n}{d})^{-1} = \frac{d}{8}$.

Notice that the same instance also shows a gap between the worst-case
revenue of the best uniform price vector and the best discriminating
prices. The seller can offer all nodes on the right a price
of 1 and the left a price of \half; in the unique equilibrium, the bidders
on the right never buy and the bidders on the left choose a threshold
of \half.
\end{example}

\subsection{Hardness of Bounding Revenue for General Graphs}
\label{sec:simult-hard}

We would like to extend the results from the previous sections
beyond complete and $d$-regular
graphs, and find a method to compute prices that
approximately optimize revenue for arbitrary networks.
Recall the nature of our analysis for Theorem~\ref{thm:main-clique} and
Theorem~\ref{thm:main-d-regular}: in each case, we constructed a price
$p$ and then bounded the revenue of the worst-case equilibrium 
$\Ts \in \nash_{p}$ with respect to either an upper bound on the
revenue of \emph{any} equilibrium for \emph{any} price vector 
(in the case of Theorem~\ref{thm:main-clique}) or the worst-case
revenue for any uniform price vector 
(for Theorem~\ref{thm:main-d-regular}).
Can we hope to extend these methods to general networks?

In this section, we show that there are inherent difficulties in
extending these approaches to handle general networks.
In appendix \ref{appendix:example}, we give an instance of a
network such that, for \emph{every} price, the gap between the
best-case and worst-case revenues is $\Omega(n)$.
(The complete bipartite graph $K_{n/2,n/2}$, generalizing 
Example~\ref{ex:d-reg-revenue}, shows the same for carefully chosen
prices, but not all prices.)

One might instead hope to analyze worst-case revenues directly, as in
Theorem~\ref{thm:main-d-regular}.
However, we again find that this poses a difficulty in general networks.
Theorem \ref{thm:NP-hard-rev} (whose proof is given in appendix \ref{app:NP-hard})
shows that it is NP-hard to approximate the worst-case revenue for a
given $p$, over all equilibria $\Ts \in \nash_{p}$, to within a factor
of $n^{1-\epsilon}$, even when $F$ is the uniform distribution.

\begin{theorem}
\label{thm:NP-hard-rev}
Assume that all agents' valuations are drawn i.i.d.~from the uniform
distribution on $[0,1]$.
Given a graph $G$ with $n$ nodes and a uniform price vector 
$\vc{p} = p\cdot \vc{1}$, it is NP-hard to approximate the worst-case
revenue to within a factor $n^{1-\epsilon}$.
\end{theorem}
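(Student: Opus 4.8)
The plan is to give a gap-preserving reduction from the \emph{Minimum Maximal Independent Set} problem (equivalently, Minimum Independent Dominating Set), which is NP-hard to approximate within a factor $n^{1-\epsilon}$; write $i(G)$ for its optimum on a graph $G$. The reduction rests on a clean description of equilibria when $F$ is uniform on $[0,1]$. Then $F(x)=\min\{x,1\}$, so the equilibrium condition~\eqref{eqn:equilibrium-condition} reads $T_i\cdot\prod_{j\in N(i)}\min\{T_j,1\}=p$ for all $i$, and $\rev(\vc{p},\Ts)=p\big(n-\sum_i\min\{T_i,1\}\big)$; thus minimizing revenue over $\nash_{\vc{p}}$ is the same as \emph{maximizing} the expected number of free-riders. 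It is convenient to set $a_i=-\ln F(T_i)\ge 0$: the equilibrium conditions become $\sum_{j\in N[i]}a_j=\ln(1/p)$ whenever $a_i>0$ and $\sum_{j\in N[i]}a_j\ge\ln(1/p)$ whenever $a_i=0$ (with $N[i]=N(i)\cup\{i\}$), and, up to constants depending only on the fixed value of $p$ (say $p=1/2$), $\rev(\vc{p},\Ts)$ is proportional to $\sum_i a_i$.

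One direction is immediate. Given a maximal independent set $D$ of $G$, put $a_i=\ln(1/p)$ for $i\in D$ and $a_i=0$ otherwise; independence makes the closed-neighborhood sum at each $i\in D$ exactly $\ln(1/p)$, and domination makes it at least $\ln(1/p)$ at each $i\notin D$, so this is an equilibrium of revenue exactly $p(1-p)\,|D|$. Hence $\min_{\Ts\in\nash_{\vc{p}}}\rev(\vc{p},\Ts)\le p(1-p)\,i(G)$, which is what ``yes'' instances of the reduction need.

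The crux is the converse: in the \emph{reduced} instance every equilibrium should have revenue $\Omega\big(p(1-p)\,i(G)\big)$, so that a better-than-$n^{1-\epsilon}$ estimate of the worst-case revenue would estimate $i(G)$ too well. This amounts to ruling out cheap \emph{fractional} equilibria, i.e.\ showing that weight vectors $a$ with some $a_i$ strictly between $0$ and $\ln(1/p)$ cannot undercut the integral (maximal-independent-set) equilibria by more than a constant factor. This is the hard part: the obvious attempt --- threshold $a$ and argue the resulting set is a small maximal independent set --- fails, because the thresholded set need be neither independent nor dominating and need not even carry a constant fraction of $\sum_i a_i$ (a high-degree vertex can be ``covered'' by many vertices each holding a tiny amount of weight). \textbf{This is the main obstacle.}

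The plan is to circumvent it by not reducing from the source graph $H$ directly, but attaching to each vertex of $H$ a small rigid gadget (for instance a short pendant path or a constant-size clique), engineered so that in the resulting graph $G$ every equilibrium is forced to be essentially integral on the gadgets, and the set of ``active'' gadgets must form an independent dominating set of $H$ --- while $i(G)$ is a controlled function of $i(H)$ and $|G|=\mathrm{poly}(|H|)$. The substantive work is designing such a gadget and proving it pins down the equilibrium structure, in particular that no mixed equilibrium can spread weight so as to beat the intended optimum by more than a constant; fixing $p$ to a constant, polynomiality, and folding the constant-factor loss into $\epsilon$ are routine.
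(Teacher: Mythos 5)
Your setup is right and matches the paper's: for the uniform distribution the worst-case revenue is characterized by exactly the covering-with-complementarity program you write down (the paper derives it as program~\eqref{eqn:LP} in the proof of Theorem~\ref{thm:uniform_general_network}, with $x_i=\log(1/T_i)/\log(1/p)$, which is your $a_i/\ln(1/p)$), integral solutions are maximal independent sets, and the objective is $\Theta(\sum_i x_i)$ for fixed constant $p$. Your ``easy direction'' is correct. But the proof stops at precisely the point you flag as the main obstacle: you never exhibit a gadget that forces equilibria to be essentially integral, and without it there is no argument that fractional equilibria cannot undercut the intended optimum. As it stands this is a plan for a proof, not a proof --- the entire content of the hardness result lives in the gadget you defer.

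For comparison, the paper resolves this with a different (and in some ways lighter) reduction, from 3-SAT rather than from Minimum Maximal Independent Set. Each variable gets a gadget consisting of adjacent nodes $\mathrm{T}$ and $\mathrm{F}$, each with two pendant leaves; the forcing argument is short: if $0<x_u<1$ for a node $u$ with two leaves $w_1,w_2$, the covering constraints at the leaves give $x_{w_1},x_{w_2}\ge 1-x_u$, hence $\sum_{j\in N(u)\cup\{u\}}x_j\ge 2-x_u>1$, and complementarity then forces $x_u=0$; so $(x_{\mathrm{T}},x_{\mathrm{F}})\in\{(0,1),(1,0),(0,0)\}$. Crucially, the $n^{1-\epsilon}$ gap is \emph{not} inherited from an inapproximable source problem: it is manufactured by attaching $k^L$ duplicate clause nodes to the three literal nodes of each clause, so that a satisfying assignment yields an equilibrium of revenue $O(m)$ while unsatisfiability forces all $k^L$ copies of some clause to take value $1$, giving revenue $\Omega(k^L)=\Omega(n^{1-1/L})$. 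Your route could in principle work too, but you would still have to design and verify a rigidity gadget of exactly this kind, and you would additionally need to control how $i(G)$ relates to $i(H)$ after attaching gadgets --- extra work that the 3-SAT reduction avoids. If you want to complete your writeup, the pendant-leaf trick above is the missing ingredient.
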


\subsection{General Graphs, Uniform Distribution}
\label{sec:simult-ND}
Motivated by the gap between best-case and worst-case revenue, and the
approximation hardness, we explore an alternative approach.
While Theorem~\ref{thm:NP-hard-rev} demonstrates that we cannot
hope to compute the revenue generated by any given price,
we show that for i.i.d.~uniform distributions of agent valuations, the
impact of the equilibrium and of the price choice can be decoupled, so
that an approximately optimum price can be set even without knowledge
of the network. It turns out that a price of \half gives a
constant-factor approximation.

The key to our analysis is to show that, for the case of the uniform
distribution, there is an underlying structure to each equilibrium
that does not depend on the price chosen by the seller.
Even further, it is possible to express revenue as the product of two
terms, the first determined by the chosen price and the second 
by the structure of the equilibrium selected by the agents.
This allows us to optimize the price term independently of the
equilibrium structure.

\begin{theorem}\label{thm:uniform_general_network}
Let $G$ be an arbitrary network, and assume that the agents'
valuations are drawn i.i.d.~from the uniform distribution on $[0,1]$.
Then, the worst-case revenue obtained from offering a uniform price of
$\frac{1}{2}$ is at least an $\frac{e}{4}$ fraction of the
worst-case revenue for the optimum (network-specific) price. 
Formally: 
\[
\min_{\Ts \in \nash_{1/2}} \rev({\textstyle\frac{1}{2}} \cdot \vc{1}, \Ts) 
\; \geq \; \frac{e}{4} \cdot \max_{\vc{p} = p \cdot \vc{1}} 
\min_{\Ts \in \nash_{\vc{p}}} \rev(\vc{p},\Ts).
\]
\end{theorem}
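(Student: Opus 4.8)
The plan is to exploit the fact that, under the uniform distribution ($F(x)=x$ on $[0,1]$), the combinatorial ``type'' of an equilibrium is independent of the price $p$, which decouples the seller's choice of price from the agents' choice of equilibrium. First I would unpack the equilibrium conditions: for an equilibrium $\Ts$ at uniform price $p$, set $x_i := F(T_i)=\min(T_i,1)$, the probability that $i$ does not buy, and call $i$ \emph{active} if $x_i<1$ and \emph{inactive} otherwise; let $A$ denote the active set. Since $x_j=1$ for inactive $j$, the condition $T_i\prod_{j\in N(i)}F(T_j)=p$ becomes $\prod_{j\in\overline{N}(i)}x_j=p$ for active $i$ (with $\overline{N}(i)=N(i)\cup\{i\}$), and $\prod_{j\in N(i)}x_j\le p$ for inactive $i$; in particular $x_i\in[p,1)$ for active $i$.

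Next I would pass to logarithms to expose the price-free structure --- this is the heart of the argument. For active $i$ put $z_i := \ln(1/x_i)/\ln(1/p)\in(0,1]$ and $z_i:=0$ for inactive $i$. The conditions above turn into the linear system $\sum_{j\in\overline{N}(i)\cap A}z_j=1$ for $i\in A$ and $\sum_{j\in N(i)\cap A}z_j\ge 1$ for $i\notin A$, which involves $G$ only, not $p$. Call $(A,z)$ \emph{feasible} if it solves this system with $z\ge 0$ supported on $A$ and $z_i>0$ on $A$. I would then prove the two-way correspondence: (i) every equilibrium at any price $p$ induces, via $x_i=p^{z_i}$, a feasible $(A,z)$; and (ii) conversely, every feasible $(A,z)$ induces, for \emph{every} $p\in(0,1)$, an equilibrium in $\nash_{p}$ by setting $x_i=p^{z_i}$ on $A$ and $x_i=1$ off $A$. (The only delicate point is boundary equilibria, where $T_i=1$ exactly for some inactive $i$; these cause no difficulty, as such an $i$ buys with probability $0$.) Thus the equilibria at price $p$ and at price $p'$ are the same up to the reparametrization $x_i\mapsto x_i^{\ln(1/p')/\ln(1/p)}$.

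With the structure in hand, the revenue decomposes: for the equilibrium attached to $(A,z)$ at price $p$,
\[
\rev(p\cdot\vc{1},\Ts)=p\sum_{i\in A}(1-x_i)=\sum_{i\in A}g_{z_i}(p),\qquad g_t(p):=p-p^{1+t},
\]
and by the correspondence the worst-case revenue at price $p$ equals $\min_{(A,z)\text{ feasible}}\sum_{i\in A}g_{z_i}(p)$ (which is positive, since $A\neq\emptyset$ for $n\ge 1$). I would then establish the one-variable inequality
\[
g_t(\tfrac12)\ \ge\ \tfrac{e}{4}\,g_t(p)\qquad\text{for all }t\in(0,1],\ p\in(0,1),
\]
by computing $\max_{p\in(0,1)}g_t(p)=\tfrac{t}{1+t}(1+t)^{-1/t}$ (attained at $p^*=(1+t)^{-1/t}$) and checking that $g_t(\tfrac12)/g_t(p^*)\ge e/4$ for $t\in(0,1]$ --- in fact this ratio is $\ge e\ln2/2$ and equals $1$ at $t=1$, so $e/4$ is a valid (if loose) constant. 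Summing over $i$ yields $\sum_i g_{z_i}(\tfrac12)\ge\tfrac e4\sum_i g_{z_i}(p)$ for every feasible $(A,z)$ and every $p$.

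Finally, letting $(A^*,z^*)$ attain $\min_{(A,z)}\sum_i g_{z_i}(\tfrac12)$ --- a worst-case equilibrium at price $\tfrac12$ --- I would conclude that for every uniform price $p\cdot\vc{1}$,
\[
\min_{\Ts\in\nash_{1/2}}\rev(\tfrac12\cdot\vc{1},\Ts)=\sum_{i\in A^*}g_{z^*_i}(\tfrac12)\ \ge\ \tfrac e4\sum_{i\in A^*}g_{z^*_i}(p)\ \ge\ \tfrac e4\min_{(A,z)}\sum_i g_{z_i}(p)=\tfrac e4\min_{\Ts\in\nash_{p}}\rev(p\cdot\vc{1},\Ts),
\]
and taking the maximum over $p$ gives the theorem. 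The main obstacle is the second paragraph: verifying that the active set together with the \emph{normalized} log-thresholds is exactly a price-free invariant, and that the continuum of equilibria at a fixed price corresponds precisely to the continuum of feasible weight vectors $z$ over a fixed $A$. Granting this, the remainder (unpacking the conditions, the revenue decomposition, and the single-variable estimate) is routine.
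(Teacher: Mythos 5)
Your proposal is correct, and its core is the same as the paper's: the substitution $x_i=\log(1/T_i)/\log(1/p)$ (your $z_i$) that turns the equilibrium conditions into a price-independent feasibility system on the graph, so that the worst-case equilibrium ``structure'' can be chosen independently of the price. Where you diverge is in the last step. The paper sandwiches each term via $\frac{1-p}{\log(1/p)}\,y\le 1-e^{-y}\le y$ on $[0,\log(1/p)]$, reducing everything to the single quantity $\min\sum_i x_i$ multiplied by a price-dependent factor; it then compares the lower bound at $p=\tfrac12$ (factor $p(1-p)=\tfrac14$) against the maximized upper bound at $p=1/e$ (factor $p\log(1/p)=1/e$), giving $e/4$. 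You instead keep the exact per-agent revenue $g_t(p)=p-p^{1+t}$ and prove the pointwise domination $g_t(\tfrac12)\ge \tfrac{e}{4}\max_{p}g_t(p)$ for $t\in(0,1]$, then sum. Your route is slightly less lossy --- it actually yields the constant $\tfrac{e\ln 2}{2}\approx 0.94$ rather than $e/4\approx 0.68$ --- at the cost of a one-variable calculus verification in place of the paper's cleaner linear sandwich; both correctly handle the quantifier structure (fix a worst equilibrium at $\tfrac12$, compare termwise at an arbitrary $p$, then pass to the minimum over feasible structures and the maximum over $p$). The ``delicate point'' you flag, the two-way correspondence between equilibria at any price and feasible price-free vectors (including the boundary cases $T_i\ge 1$), is exactly what the paper's program \eqref{eqn:LP} encodes, so your treatment matches.
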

\begin{emptyproof}
%
Given a price vector $\vc{p} = p \cdot \vc{1}$, an equilibrium 
$\vc{T} \in \nash_{\vc{p}}$ is a vector such that 
$T_i \cdot \prod_{j \in N(i)} F(T_j) = p$, 
where $F(T_j) = \min \{1, T_j\}$ for the uniform distribution on $[0,1]$.
Note that a threshold $T_i > 1$ is ``behaviorally'' equivalent to a
threshold $T_i = 1$, since the support of the valuations is $[0,1]$.
Applying this definition of the distribution function, the 
equilibrium condition becomes
\begin{align*}
\text{(i) } T_i & \in [p,1]; 
& \text{(ii) } \prod_{j \in N(i) \cup \{i\}} T_j & \leq p; 
& \text{(iii) } \prod_{j \in N(i) \cup \{i\}} T_j < p \implies T_i = 1.
\end{align*}

The worst-case equilibrium for the price vector $\vc{p}$ can therefore
be expressed as the following mathematical program:

\begin{LP}{$\text{Minimize}_{\Ts \in \nash_{p}}$}{%
\rev(p \cdot \vc{1},\Ts) \; = \; p \cdot \textstyle\sum_i (1-T_i)}
\prod_{j \in N(i) \cup \{i\}} T_j \leq p & \text{for all } i\\
\prod_{j \in N(i) \cup \{i\}} T_j < p \implies T_i = 1 
& \text{for all } i \\
p \leq T_i \leq 1 & \text{for all } i.
\end{LP}


We use the transformation
$x_i = \frac{\log(1/T_i)}{\log(1/p)}$ (and thus $T_i = p^{x_i}$)
in order to bring the program into a form in which the constraints
carry only information about the graph and are independent of the
price $p$.
In addition, as a result, the objective function depends only on the
price and not on the graph structure. 
This decouples the two aspects of the problem:

\begin{LP}[eqn:LP]{$\text{Minimize}_{\Ts \in \nash_{p}}$}{%
\rev(p \cdot \vc{1}, \Ts) 
\; = \; p \cdot \sum_i (1-\exp( -x_i \cdot \log(1/p)))}
\sum_{j \in N(i) \cup \{i\}} x_j \geq 1 & \text{for all } i \\
\sum_{j \in N(i) \cup \{i\}} x_j > 1 \implies x_i = 0
& \text{for all } i \\
0 \leq x_i \leq 1 & \text{for all } i.
\end{LP}

For the range $y \in [0, \log(1/p)]$, elementary facts about the
exponential function imply the following bounds:
$\frac{(1-p)}{\log(1/p)} \cdot y 
 \leq 1-e^{-y} 
\; \leq \; y$. 
Writing $X$ for the set of vectors $\vc{x}$ that are feasible for
the program~\eqref{eqn:LP}, we apply the bound on the exponential
function to the program~\eqref{eqn:LP}, obtaining that
\begin{align*}
p \cdot (1-p) \cdot \min_{\vc{x} \in X} \sum_i x_i  
& \leq \min_{\Ts \in \nash_{p}} \rev(p \cdot \vc{1}, \Ts)
\; \leq \;  p \cdot \log(1/p) \cdot \min_{\vc{x} \in X} \sum_i x_i.
\end{align*}

Thus, we have upper and lower bounds on the value of the worst-case
revenue for each price $p$.
Notice that the upper bound is maximized for $p = 1/e$, so
\begin{align*}
\max_{\vc{p} = p \cdot \vc{1}} \min_{\Ts \in \nash_{\vc{p}}} \rev(\vc{p},\Ts)
& \leq \frac{1}{e} \cdot \min_{\vc{x} \in X} \sum_i x_i.
\end{align*}
On the other hand, setting $p = \frac{1}{2}$ maximizes the
lower-bound, giving us that
\begin{align*}
\min_{\Ts \in \nash_{1/2}} \rev(\frac{1}{2} \cdot \vc{1}, \Ts) 
& \geq \frac{1}{4} \cdot \min_{\vc{x} \in X} \sum_i x_i 
\; \geq \; \frac{e}{4} \cdot \max_{\vc{p} = p \cdot \vc{1}} 
  \min_{\Ts \in \nash_{\vc{p}}} \rev(\vc{p},\Ts). \QED
\end{align*}
\end{emptyproof}

The analysis above was based on choosing the Myerson Price for the
uniform distribution, in order to maximize the ``price component'' of
the product; the ``equilibrium component'' factored out, and
contributed at most a constant-factor loss in revenue.
One might suspect that the Myerson Price would provide a constant
approximation for all (regular) distributions.
However, in appendix \ref{appendix:example}, we provide an example
which shows that this is not the case, even for the complete network.

\section{Concluding Remarks} \label{sec:concl}

In this paper, we initiated the study of revenue-maximal pricing
for locally public goods. We conclude by discussing
potential extensions and questions left open by our work.


\paragraph{Simultaneous vs.~Sequential purchases} In our model, all agents
simultaneously observe the price of the good and decide whether
to purchase. In an alternative scenario,
the seller sequentially offers the good to each agent in turn, who can then
decide whether to buy given the choices of those who came before.
%
This leads to a pricing problem that is similar to the one studied
here, except that the natural solution concept for the pricing game
becomes the subgame perfect equilibrium rather then Nash Equilibrium.
Does the increased possibility of coordination in sequential sales
unambiguously help or harm revenue?

\paragraph{Imperfect public goods} We consider scenarios where the good
is a perfect public good: the benefit of owning it and having a
neighbor that owns it are the same. Most goods, however, have both public and
private components. The purchase of a big-screen television 
provides some benefit to the purchaser's friends,
who can visit and watch/play, but the greatest benefit goes to the purchaser
himself.
%
One can consider a utility model where $u_i(v_i, S) = v_i - p_i$ if $i \in S$,
$(1-\epsilon) v_i$ if $i \notin S$ but $N(i) \cap S \neq \emptyset$,
and $u_i(v_i, S) = 0$ otherwise.

\paragraph{Strength of social ties} We assumed that all links of the social
network are homogeneous. One could also consider the case in which
network links are weighted. 
The weights might correspond to the extent to which benefits
are shared along a link. In such a case,
one can assume that the network is represented by a matrix $w_{ij}$ where
$w_{ii} = 1$, $w_{ij} = 0$ for $j \notin N(i) \cup \{i\}$ and $w_{ij} \in
[0,1]$ otherwise. Then, the utility of agent $i$ for an outcome $S$ is given by
$u_i(v_i,S) = v_i \cdot \max_{j \in S} w_{ij} - p_i \cdot \one\{i \in S\}$. An
even further generalization is to consider a submodular function $f_i :
2^{[n]} \rightarrow \R_+$ for each agent such that his utility is
given by $u_i = v_i \cdot f_i(S) - p_i \cdot \one\{i \in S\}$.

\paragraph{Other applications and objectives} We believe that our model
and techniques can be useful 
for additional related settings.
Consider, for example, the following {\em snow-shoveling} setting.
Suppose that a landlord of an apartment building wants to make sure
that snow is shoveled from the sidewalk in front of his building.
Thereto, he imposes a fine on each tenant in the case that the
sidewalk is not shoveled. The tenants now face a problem that is
similar in spirit to purchasing a public good.
Each tenant incurs a personal cost from snow shoveling, drawn
from some distribution, and needs to decide whether to exert effort
(and incur the associated cost), 
or else pay the fine if none of the other tenants shoveled.
The landlord, in determining the fine, must balance between
different objectives, such as getting the snow
shoveled, his own revenue, and the social welfare of the tenants. 
This is an example of a broader class of problems in which a
policy maker must
decide on mechanisms that are only applicable to individuals
in order to encourage group behaviors. This example illustrates the
appeal of this problem with objectives other then revenue. We believe
that this problem has a structure similar to public goods, and that our
techniques might be useful there.

\bibliographystyle{abbrv}
\bibliography{names,conferences,bibliography,sigproc}

\begin{thebibliography}{10}

\bibitem{akhlaghpour10}
H.~Akhlaghpour, M.~Ghodsi, N.~Haghpanah, H.~Mahini, V.~Mirrokni, and A.~Nikzad.
\newblock Optimal iterative pricing over social networks.
\newblock In {\em Proc. 6th Workshop on Internet and Network Economics (WINE)},
  2010.

\bibitem{alaei11}
S.~Alaei.
\newblock Bayesian combinatorial auctions: Expanding single buyer mechanisms to
  many buyers.
\newblock In {\em Proc. 52nd IEEE Symp. on Foundations of Computer Science},
  pages 512--521, 2011.

\bibitem{Allouch}
N.~Allouch.
\newblock On the private provision of public goods on networks.
\newblock Working Paper 2012.40, Fondazione Eni Enrico Mattei, May 2012.

\bibitem{anari10}
N.~Anari, S.~Ehsani, M.~Ghodsi, N.~Haghpanah, N.~Immorlica, H.~Mahini, and
  V.~Mirrokni.
\newblock Equilibrium pricing with positive externalities.
\newblock In {\em Proc. 6th Workshop on Internet and Network Economics (WINE)},
  2010.

\bibitem{arthur:motwani:sharma:xu}
D.~Arthur, R.~Motwani, A.~Sharma, and Y.~Xu.
\newblock Pricing strategies for viral marketing on social networks.
\newblock In {\em Proc. 5th Workshop on Internet and Network Economics (WINE)},
  pages 101--112, 2009.

\bibitem{Bergstrom86}
T.~Bergstrom, L.~Blume, and H.~R. Varian.
\newblock {On the Private Provision of Public Goods}.
\newblock {\em Journal of Public Economics}, 29(1):25--49, Jan. 1986.

\bibitem{Bhalgat12}
A.~Bhalgat, S.~Gollapudi, and K.~Munagala.
\newblock Mechanisms and allocations with positive network externalities.
\newblock In {\em Proc. 14th ACM Conf. on Electronic Commerce}, pages 179--196,
  2012.

\bibitem{BramoulleKranton07}
Y.~Bramoull\'{e} and R.~Kranton.
\newblock Public goods in networks.
\newblock {\em Journal of Economic Theory}, 135(1):478--494, July 2007.

\bibitem{BramoulleKranton10}
Y.~Bramoull\'{e}, R.~Kranton, and M.~D'Amours.
\newblock Strategic interaction and networks.
\newblock Cahiers de recherche 1018, CIRPEE, 2010.

\bibitem{brocas:endogenous}
I.~Brocas.
\newblock Endogenous entry in auctions with negative externalities.
\newblock {\em Theory and Decision}, 54(2):125--149, 2003.

\bibitem{candogan}
O.~Candogan, K.~Bimpikis, and A.~E. Ozdaglar.
\newblock Optimal pricing in networks with externalities.
\newblock {\em Operations Research}, 60(4):883--905, 2012.

\bibitem{chawla07}
S.~Chawla, J.~D. Hartline, and R.~D. Kleinberg.
\newblock Algorithmic pricing via virtual valuations.
\newblock In {\em Proc. 9th ACM Conf. on Electronic Commerce}, pages 243--251,
  2007.

\bibitem{chawla10}
S.~Chawla, J.~D. Hartline, D.~L. Malec, and B.~Sivan.
\newblock Multi-parameter mechanism design and sequential posted pricing.
\newblock In {\em Proc. 41st ACM Symp. on Theory of Computing}, pages 311--320,
  2010.

\bibitem{cornes:sandler:externalities}
R.~Cornes and T.~Sandler.
\newblock {\em The Theory of Externalities, Public Goods, and Club Goods}.
\newblock Cambridge University Press, 1996.

\bibitem{Galeotti08}
A.~Galeotti, S.~Goyal, M.~O. Jackson, F.~Vega-Redondo, and L.~Yariv.
\newblock Network games.
\newblock Economics Working Paper ECO2008/07, European University Institute,
  2008.

\bibitem{Haghpanah11}
N.~Haghpanah, N.~Immorlica, V.~S. Mirrokni, and K.~Munagala.
\newblock Optimal auctions with positive network externalities.
\newblock In {\em Proc. 13th ACM Conf. on Electronic Commerce}, pages 11--20,
  2011.

\bibitem{HajiaghayiKS07}
M.~T. Hajiaghayi, R.~D. Kleinberg, and T.~Sandholm.
\newblock Automated online mechanism design and prophet inequalities.
\newblock In {\em Proc. 22nd AAAI Conf. on Artificial Intelligence}, pages
  58--65, 2007.

\bibitem{hartline_lectures12}
J.~D. Hartline.
\newblock Approximation in economic design.
\newblock Lecture Notes,
  \url{http://users.eecs.northwestern.edu/~hartline/amd.pdf}, 2012.

\bibitem{hartline:mirrokni:sundarajan}
J.~D. Hartline, V.~S. Mirrokni, and M.~Sundararajan.
\newblock Optimal marketing strategies over social networks.
\newblock In {\em 17th Intl. World Wide Web Conference}, pages 189--198, 2008.

\bibitem{jackson:nicolo}
M.~O. Jackson and A.~Nicol\'o.
\newblock The strategy-proof provision of public goods under congestion and
  crowding preferences.
\newblock {\em Journal of Economic Theory}, 115(2):278--308, 2004.

\bibitem{jehiel:moldovanu:interdependent}
P.~Jehiel and B.~Moldovanu.
\newblock Efficient design with interdependent valuations.
\newblock {\em Econometrica}, 69(5):1237--59, 2001.

\bibitem{jehiel:moldovanu:externalities}
P.~Jehiel and B.~Moldovanu.
\newblock Allocative and informational externalities in auctions and related
  mechanisms.
\newblock In {\em Proc. 9th World Congress of the Econometric Society}, 2006.

\bibitem{jehiel:moldovanu:stacchetti:nuclear}
P.~Jehiel, B.~Moldovanu, and E.~Stacchetti.
\newblock How (not) to sell nuclear weapons.
\newblock {\em American Economic Review}, 86(4):814--829, 1996.

\bibitem{jehiel:moldovanu:stacchetti:multidimensional}
P.~Jehiel, B.~Moldovanu, and E.~Stacchetti.
\newblock Multidimensional mechanism design for auctions with externalities.
\newblock {\em J. Econ. Theory}, 85(2):258--294, 1999.

\bibitem{kleinberg12}
R.~Kleinberg and S.~M. Weinberg.
\newblock Matroid prophet inequalities.
\newblock In {\em Proc. 43rd ACM Symp. on Theory of Computing}, pages 123--136,
  2012.

\bibitem{krishna2009auction}
V.~Krishna.
\newblock {\em Auction Theory}.
\newblock Elsevier Science, 2009.

\bibitem{mas-collel:whinston:green}
A.~Mas-Collel, M.~D. Whinston, and J.~R. Green.
\newblock {\em Microeconomic Theory}.
\newblock Oxford University Press, 1995.

\bibitem{google_internet_chelsea}
P.~McGeehan.
\newblock Free outdoor wi-fi comes to west chelsea.
\newblock NY Times, January 8, 2013, \url{
  http://cityroom.blogs.nytimes.com/2013/01/08/free-outdoor-wi-fi-comes-to-chelsea
  /}, 2012.

\bibitem{milgrom2004putting}
P.~Milgrom.
\newblock {\em Putting Auction Theory to Work}.
\newblock Churchill Lectures in Economics. Cambridge University Press, 2004.

\bibitem{myerson-81}
R.~Myerson.
\newblock Optimal auction design.
\newblock {\em Mathematics of Operations Research}, 6(1):58--73, 1981.

\bibitem{pigou}
A.~Pigou.
\newblock {\em The Economics of Welfare}.
\newblock Macmillan, 1920.

\bibitem{samuelson54}
P.~A. Samuelson.
\newblock The pure theory of public expenditure.
\newblock {\em Review of Economics and Statistics}, 36:387--389, 1954.

\end{thebibliography}

\appendix
\section{Revenue Gap Examples}\label{appendix:example}

\begin{example}[Revenue Gap for Myerson Reserve]
\label{ex:rev.gap.myserson}
Let the underlying network be the clique $G = K_n$, and
the valuations drawn i.i.d.~from the exponential distribution 
$F(v) = 1-e^{-v}$.
First, we notice that for any price $p$, the only equilibrium is the
symmetric one:
Given any price $p$, each equilibrium $\Ts \in \nash_p$ satisfies
$T_i \prod_{j \neq i} F(T_j) = p$ for all $i$, 
so $\frac{F(T_i)}{T_i} = \frac{F(T_j)}{T_j}$ for any $i \neq j$.
The fact that $F$ is strictly concave implies that $\frac{F(T)}{T}$ is
strictly monotone decreasing. 
Therefore, $\frac{F(T_i)}{T_i} = \frac{F(T_j)}{T_j}$ 
implies that $T_i = T_j$. 
So for each price, the only equilibria are symmetric ones,
characterized by the equation $T \cdot F(T)^{n-1} = p$.

The exponential distribution has virtual value function 
$\phi(v) = v-1$; therefore, the Myerson Reserve Price is $r=1$.
For this price, the threshold $T$ applied by
the agents satisfies $1 = T(1 - e^{-T})^{n-1} =: g(T)$.  Note
that $g(T)$ is strictly increasing in $T$.  
We write $m = n-1$ for convenience.
Let $T_1 = \log m - \log \log \log m$.  Then,
\begin{align*}
g(T_1) & = (\log m - \log\log\log m)\left(1-\frac{\log\log m}{m}\right)^{m-1} \\
& < (\log m) \left(1-\frac{\log\log m}{m}\right)^{\frac{m}{\log\log m}\log\log
m}\left(1 - \frac{\log \log m}{m}\right)^{-1}  \\ 
& < (\log m) \cdot e^{-\log\log m} 
\; = \; 1.
\end{align*}
Thus, $T = g^{-1}(1)$ is greater than $T_1$, and hence, 
$F(T) > F(T_1) = 1 - \frac{\log \log (n-1)}{(n-1)}$.  We therefore
have $\rev(1) = \rev(1, T) = n \cdot (1-F(T)) < 2\log\log n$.

On the other hand, if we set $p = \log n \cdot (1 - 1/n)^{n-1}$, 
the corresponding threshold is $T' = g^{-1}(p) = \log n$.
We then have 
$\rev(p) = \rev(p, T') = p n \cdot (1 - F(T')) = p > \quarter \log n$.
The gap between the optimal revenue and the revenue of the Myerson reserve price
can therefore be as large as $\Omega(\frac{\log n}{\log \log n})$.
\end{example}

\begin{example}[Linear gap in revenue among equilibria]
\label{ex:rev.gap}
Consider a graph $G$ composed of a cycle of size $5$;
in addition, for every triple of $3$ out of the $5$ nodes in the
cycle, there is a set of $N$ nodes, each connected to each of
the nodes in the triple.  
The total number of nodes is $n = 10 N + 5$.

We choose $F$ to be the uniform distribution on $[0,1]$.  Pick an arbitrary
price $p \in (0,1)$.
We now consider two equilibria from $\nash_p$.  
In the first, each of the 5 buyers on the
cycle chooses a threshold of $p^{1/3}$, and each of the remaining $10
N$ buyers chooses a threshold of $1$.
Note that $T_i \cdot \prod_{j \in N(i)} F(T_j) = (p^{1/3})^3 = p$ 
for each buyer $i$, so this is indeed an equilibrium.  
The revenue at this equilibrium is $5 p (1 - p^{1/3} )$.

For the second equilibrium, two non-adjacent buyers on the cycle will
choose threshold $p$. 
The remaining three buyers on the cycle, say $S$, will choose threshold $1$. 
The $N$ buyers who are adjacent to the three nodes in $S$ will choose
threshold $p$; the remaining $9N$ buyers will choose threshold $1$.
Note that $T_i \prod_{j \in N(i)} F(T_j) \leq p$ for each buyer $i$. 
Moreover, for each buyer for whom $T_i \prod_{j \in N(i)} F(T_j) < p$
(i.e., each buyer with multiple neighbors who have threshold $p$),
we have $T_i = 1$. The proposed thresholds therefore form an
equilibrium.  The revenue at this equilibrium is
$(N+2) \cdot p \cdot ( 1 - p )$.

Taking $N$ to be arbitrarily large relative to $p$, we have that the
gap between the 
revenue of these two equilibria is $\Omega(N) = \Omega(n)$.
\end{example}

\section{Proof of Theorem \ref{thm:NP-hard-rev}}
\label{app:NP-hard}

In this section, we provide a proof of
Theorem~\ref{thm:NP-hard-rev}. 
We restate the theorem here for convenience.
\begin{rtheorem}{Theorem}{\ref{thm:NP-hard-rev}}
Assume that all agents' valuations are drawn i.i.d.~from the uniform
distribution on $[0,1]$.
Given a graph $G$ with $n$ nodes and a uniform price vector 
$\vc{p} = p\cdot \vc{1}$, it is NP-hard to approximate the worst-case
revenue to within a factor $n^{1-\epsilon}$.
\end{rtheorem}

\begin{proof}
We will use the characterization of the worst-case revenue with price
$p$ established in the proof of Theorem~\ref{thm:uniform_general_network}.
There, we showed that the revenue can be characterized by the
mathematical program~\eqref{eqn:LP}, repeated here for convenience:

\begin{LP}{Minimize}{%
p \cdot \sum_i (1-\exp( -x_i \cdot \log(1/p)))}
\sum_{j \in N(i) \cup \{i\}} x_j \geq 1 & \text{for all } i \\
\sum_{j \in N(i) \cup \{i\}} x_j > 1 \implies x_i = 0
& \text{for all } i \\
0 \leq x_i \leq 1 & \text{for all } i.
\end{LP}

To prove approximation hardness, we give a reduction from
$3$-SAT. More precisely, given a $3$-SAT instance, we construct an
instance of the pricing problem such that if a formula is satisfiable,
the revenue is $O(1)$, whereas if it is not satisfiable, the 
revenue is $\Omega(n^{1-\epsilon})$.

\begin{figure}[h]
\centering
\begin{tikzpicture}[scale=1.5]
\tikzstyle{blackdot}=[circle,draw=black,fill=black,thin,
inner sep=0pt,minimum size=1.5mm]
\tikzstyle{whitedot}=[circle,draw=black,fill=white,thin,
inner sep=0pt,minimum size=3.5mm]

\node (T) at (-0.5,0) [whitedot] {$\;\mathrm{T}\;$};
\node (F) at (0.5,0) [whitedot] {$\;\mathrm{F}\;$};
\node (T1) at (-1,0.3) [whitedot] {\phantom{$\;\mathrm{F}\;$}};
\node (T2) at (-1,-0.3) [whitedot] {\phantom{$\;\mathrm{F}\;$}};
\node (F1) at (1,0.3) [whitedot] {\phantom{$\;\mathrm{F}\;$}};
\node (F2) at (1,-0.3) [whitedot] {\phantom{$\;\mathrm{F}\;$}};
\draw (T) -- (F);
\draw (T) -- (T1);
\draw (T) -- (T2);
\draw (F) -- (F1);
\draw (F) -- (F2);
\end{tikzpicture}
\caption{A variable gadget.}
\label{fig:widget}
\end{figure}
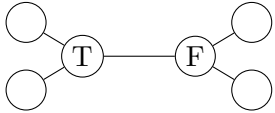

\paragraph{The reduction} Given a $3$-SAT formula with $k$ clauses and $m$
variables, construct an instance of the pricing problem as follows. 
For each variable in the formula, construct a ``gadget,'' as depicted
in Figure~\ref{fig:widget}; the nodes $\mathrm{T}$ and $\mathrm{F}$
will correspond to true and false assignments, respectively. For each clause,
we introduce $k^L$ nodes (for some fixed large $L$), each connected to three
other nodes, as follows. For every positive literal in the clause, each of
the $k^L$ clause nodes is connected to the $\mathrm{T}$-node of the
corresponding variable gadget, and for every negative literal, each of the 
$k^L$ clause nodes is connected to the $\mathrm{F}$-node of the
corresponding variable gadget

\paragraph{An observation on the resulting instance} 
We observe the following fact on the instance of the pricing problem
obtained from the reduction. 
Let $u$ and $v$ be the respective $\mathrm{T}$-node and
$\mathrm{F}$-node of a variable gadget. 
We claim that all feasible solutions to the program~\eqref{eqn:LP}
have $x_u, x_v \in \{0,1\}$. 
Indeed, let $w_1$ and $w_2$ be the leaf nodes of the variable gadget
attached to $u$. We know by the first constraint of the program that
$x_{w_1} \geq 1 - x_u$ and $x_{w_2} \geq 1 - x_u$. Therefore, 
$\sum_{i \in N(u) \cup \{u\}} x_i \geq x_u + 2\cdot (1-x_u) = 2 - x_u$.
If $x_u < 1$, then $\sum_{i \in N(u) \cup \{u\}} x_i > 1$, and the second
condition in \eqref{eqn:LP} implies that $x_u = 0$. 
The same analysis applies to $x_v$.

Also notice that $x_u = x_v = 1$ violates the second condition, since
$\sum_{i \in N(u) \cup \{u\}} x_i \geq 2 > 1$, but $x_u \neq 0$. Therefore, the
pair $(x_u, x_v)$ must be one of the set $\{(0,1), (1,0), (0,0)\}$.

We also observe that if $i$ is a clause node that is connected to a
variable node $u$ with $x_u = 1$, then $x_i = 0$ by the second condition.

\paragraph{Satisfiable formula implies low worst-case revenue} 
If the formula is satisfiable, consider a satisfying truth assignment. 
If a certain variable is assigned True, then set $x_u = 1$ for its
$\mathrm{T}$-node and $x_v = 0$ for its $\mathrm{F}$-node. 
Apply the opposite assignment for a variable that is assigned False.
For the leaf nodes of the variable gadget, set their values
to be one minus the value of their neighbor. 
Finally, set $x_v = 0$ for all all clause nodes $v$.
Since the assignment is satisfiable, each clause node is connected to
at least one node that is assigned $1$. 
Therefore, the worst-case revenue is at most 
$p \cdot (1-p) \cdot3 \cdot m$:
each node assigned $1$ produces revenue $p(1-p)$, and there
are exactly $3m$ such nodes.

\paragraph{Unsatisfiable formula implies high worst-case revenue} 
If the formula is not satisfiable,
then for every assignment of values $x_v$ to nodes that is
feasible for~\eqref{eqn:LP},
there exists at least one clause whose nodes are connected to three nodes
with value $0$. By the first condition of the program, all such clause
variables must be set to $1$. The revenue is therefore at least $3m + k^L$.

Since the graph has $n = 6m+ k \cdot k^L$ nodes, by setting $L$ to be
sufficiently large, one cannot distinguish between a solution of revenue
$n^{1-1/L}$ and a solution of revenue $O(1)$. 
This gives an $\Omega(n^{1-\epsilon})$ hardness
for approximating the worst-case revenue, for every $\epsilon >0$.
\end{proof}

\end{document}